\documentclass[svgnames, x11names, dvipsnames,xcolor=dvipsnames,12pt, draftclsnofoot, onecolumn]{IEEEtran}

\usepackage{enumitem}

\usepackage[dvipsnames]{xcolor}

\usepackage{amssymb}

\usepackage[most]{tcolorbox}
\tcbset{on line, 
        boxsep=4pt, left=0pt,right=0pt,top=0pt,bottom=0pt,
        colframe=white,colback=gray!10,  
        highlight math style={enhanced}
        }
        
\usepackage[noadjust]{cite}
\usepackage{filecontents}

\let\svtikzpicture\tikzpicture
\def\tikzpicture{\noindent\svtikzpicture}

\usepackage{amsbsy}
\usepackage{bm}
\usepackage{fixmath}
\usepackage{silence}
\usepackage{ctable}
\usepackage{pst-node,pst-plot}
\usetikzlibrary{shapes.geometric}
\usepackage{mathdots}
\usetikzlibrary{math}

\usepackage{upgreek}

\usepackage{autobreak}
\allowdisplaybreaks

 \usepackage{caption}
\usepackage{commath,amsmath,amssymb,amsfonts}
\usepackage{mathtools}
\usepackage{amsthm}
\usepackage{algorithmic}
\usepackage{pgfplots} 
\pgfplotsset{compat=1.15}
\usepackage{graphicx}
\usepackage{pgfgantt}
\usepackage{pdflscape}
\usepackage{pst-plot}
\usepackage{xfrac}
\usepackage{colortbl}
\usepackage{cancel}
\usepgfplotslibrary{fillbetween}
\usepackage{amssymb,bm}
\usepackage{float}
\usepackage{amsmath}
\usepackage[draft=false]{hyperref}
\usepackage{multirow}

\usepackage{mathrsfs}
\usepackage{bbm}

\newcommand{\floor}[1]{\left \lfloor #1 \right \rfloor}

\usepackage{cite} 

\usepackage{comment}

\usepackage{autobreak}
\allowdisplaybreaks

\def \D{\mathcal{D}}

\def \N{\mathcal{N}}

\def \P{\mathcal{P}}

\def \S{\mathcal{S}}

\def \fA{\mathbf{A}}
\def \fB{\mathbf{B}}

\def \fU{\mathbf{U}}
\def \fV{\mathbf{V}}

\def \fc{\mathbf{c}}

\def \ff{\mathbf{f}}

\def \fC{\mathbf{C}}
\def \fd{\mathbf{d}}

\def \fh{\mathbf{h}}
\def \fH{\mathbf{H}}
\def \fI{\mathbf{I}}

\def \fV{\mathbf{V}}
\def \fx{\mathbf{x}}
\def \fy{\mathbf{y}}
\def \fz{\mathbf{z}}

\def \fU{\mathbf{U}}

\def \fQ{\mathbf{Q}}
\def \fV{\mathbf{V}}
\def \fX{\mathbf{X}}
\def \fY{\mathbf{Y}}
\def \fZ{\mathbf{Z}}
\def \f0{\mathbf{0}}

\definecolor{blau_1a}{RGB}{93,133,195}
\definecolor{blau_2a}{RGB}{0,156,218}
\definecolor{gruen_3a}{RGB}{80,182,149}
\definecolor{gruen_4a}{RGB}{175,204,80}
\definecolor{gruen_5a}{RGB}{221,223,72}
\definecolor{orange_6a}{RGB}{255,224,92}
\definecolor{orange_7a}{RGB}{248,186,60}
\definecolor{rot_8a}{RGB}{238,122,52}
\definecolor{rot_9a}{RGB}{233,80,62}
\definecolor{lila_10a}{RGB}{201,48,142}
\definecolor{lila_11a}{RGB}{128,69,151}

\definecolor{blau_1b}{RGB}{0,90,169}
\definecolor{blau_2b}{RGB}{0,131,204}
\definecolor{gruen_3b}{RGB}{0,157,129}
\definecolor{gruen_4b}{RGB}{153,192,0}
\definecolor{gruen_5b}{RGB}{201,212,0}
\definecolor{orange_6b}{RGB}{253,202,0}
\definecolor{orange_7b}{RGB}{245,163,0}
\definecolor{rot_8b}{RGB}{236,101,0}
\definecolor{rot_9b}{RGB}{230,0,26}
\definecolor{lila_10b}{RGB}{166,0,132}
\definecolor{lila_11b}{RGB}{114,16,133}

\definecolor{mycolor1}{rgb}{0.0, 0.18, 0.39}
\definecolor{mycolor2}{RGB}{87,108,67}
\definecolor{mycolor3}{RGB}{8,133,161}
\definecolor{mycolor4}{RGB}{80,91,161}
\definecolor{mycolor5}{RGB}{98,122,157}
\definecolor{mycolor6}{RGB}{255,163,67}
\definecolor{mycolor7}{RGB}{152,205,225}
\definecolor{mycolor8}{RGB}{242,204,48}
\definecolor{mycolor9}{rgb}{0,.5,0}
\definecolor{mycolor10}{rgb}{.59,.44,.09}
\definecolor{mycolor11}{RGB}{231,199,31} % Yellow
\definecolor{mycolor12}{RGB}{8,133,161} % Cyan
\definecolor{mycolor13}{RGB}{157,188,64} % Yellow Green
\definecolor{mycolor14}{RGB}{194,150,130} % Light Skin
\definecolor{mycolor15}{RGB}{98,122,157} % Blue Sky
\definecolor{mycolor16}{RGB}{160,160,160} % Neutral
\definecolor{mycolor17}{RGB}{115,82,68} % Dark Skin
\definecolor{mycolor18}{RGB}{94,60,108} % Purple
\definecolor{mycolor19}{RGB}{115,82,68} % Dark Skin
\definecolor{mycolor20}{RGB}{255,183,30} % Dark Gold

\usepackage{accents}

\theoremstyle{remark} \newtheorem{theorem}{Theorem}
\theoremstyle{remark} 
\theoremstyle{remark} \newtheorem{definition}{Definition}
\theoremstyle{remark} 
\theoremstyle{remark} 

\newtheorem{innercustomgeneric}{\customgenericname}
\providecommand{\customgenericname}{}
\newcommand{\newcustomtheorem}[2]{%
  \newenvironment{#1}[1]
  {%
   \renewcommand\customgenericname{#2}%
   \renewcommand\theinnercustomgeneric{##1}%
   \innercustomgeneric
  }
  {\endinnercustomgeneric}
}
\newcustomtheorem{customcorollary}{Corollary}
\newcustomtheorem{customlemma}{Lemma}
\newcustomtheorem{customproposition}{Proposition}

\usepackage{amssymb}

\usepackage{autobreak}
\allowdisplaybreaks

\usepackage{tikz}
\usetikzlibrary{arrows.meta}

\usepackage{blochsphere}

\usepackage{multicol, blindtext}

\usepackage[]{pgfplots}
\usepackage{pgfplotstable}
\usepgfplotslibrary{dateplot}
\pgfplotsset{/pgf/number format/use comma,compat=newest}

\usepackage{amsbsy}
\usepackage{bm}
\usepackage{fixmath}
\usepackage{silence}
\usepackage{ctable}
\usepackage{pst-node,pst-plot}
\usetikzlibrary{shapes.geometric}
\usepackage{mathdots}
\usetikzlibrary{math}

\usepackage{feyn}
\usepackage{mathtools,multirow}

\def\fc{{\bf c}}
\def\fA{{\bf A}}
\def\fU{{\bf U}}
\def\fV{{\bf V}}

\def\fB{{\bf B}}
\def\fx{{\bf x}}

\def\fSigma{\boldsymbol{\mathbf{\Sigma}}}

\usetikzlibrary{arrows, arrows.meta}
\usepackage{pgfplots}
\pgfplotsset{compat=1.15}
\usepgfplotslibrary{polar}

\usepackage[bottom]{footmisc}

\usepackage{cleveref}
\crefname{equation}{Eq}{} % capitalize "E", no period
\crefrangelabelformat{equation}{(#3#1#4--#5#2#6)}

\usepackage{tcolorbox}

\usepackage{lipsum}   % To generate test text 
\usepackage{framed}
\usepackage{tikz}
\usetikzlibrary{decorations.pathmorphing,calc}
\pgfmathsetseed{1} % To have predictable results
\pgfdeclarelayer{background}
\pgfsetlayers{background,main}

\def \bG {\mathcal{G}_{\fh}}

\usepackage{nccmath}

\usepackage[svgnames]{xcolor}
\usepackage{tikz}
\usepackage{mathtools}

\def\nudge{.5}

\tikzset{axis/.style={ultra thick, Red!75!black, -latex, shorten <=-\nudge cm, shorten >=-2*\nudge cm}}
\tikzset{line/.style={thick,Green}}
\allowdisplaybreaks

\def\fc{{\bf c}}
\def\fA{{\bf A}}
\def\fU{{\bf U}}
\def\fV{{\bf V}}

\def\fB{{\bf B}}
\def\fx{{\bf x}}

\def\flambda{\boldsymbol{\mathbf{\Uplambda}}}
\def\fSigma{\boldsymbol{\mathbf{\Sigma}}}

\makeatletter
\patchcmd{\@maketitle}
{\addvspace{0.5\baselineskip}\egroup}
{\addvspace{-.5\baselineskip}\egroup}
{}
{}
\usepackage[letterpaper,bindingoffset=0.2in,
left=.7in,right=.7in,top=.9in,bottom=1.1in,
footskip=.2in]{geometry}

\usepackage{xpatch}
\xapptocmd{\appendix}{%
}{}{\PatchFailed}

\begin{document}
	
\title{Identification for Colored Gaussian Channels}
\author{\vspace{5mm} \fontsize{13}{13} \selectfont Mohammad Javad Salariseddigh\IEEEauthorrefmark{1} and Christian Deppe\IEEEauthorrefmark{2}
	\vspace{7mm}
	\\
	\fontsize{11}{11} \selectfont \IEEEauthorrefmark{1}Technical University of Darmstadt
	\\
	\vspace{1mm}
	\fontsize{11}{11} \selectfont\IEEEauthorrefmark{2}Technical University of Braunschweig
	\\
	% Email: \{m.j.salariseddigh@rcs.tu-darmstadt.de,christian.deppe@tu-braunschweig.de\}
}

	\maketitle
	
	\IEEEpeerreviewmaketitle
	
	% ---
	\begin{abstract}
		We study the identification capacity of discrete-time Gaussian channels impaired by correlated noise and inter-symbol interference (ISI). Our analysis is formulated for deterministic encoding functions subject to a peak power constraint and colored noise whose covariance matrix features a polynomially bounded singular value spectrum, i.e., $\sim [n^{-\mu} , n^{\mu/2}]$ where $n$ is the codeword length and $\mu \in [0,1/2)$ is the spectrum rate.  A central result establishes that, even when the ISI memory length grows sub-linearly with $n,$ i.e., $\sim n^{\kappa}$ where $\kappa \in [0,1/2)$ and $\kappa + \mu \in [0,1/2),$ the codebook size continues to exhibit super-exponential growth in $n$, i.e., $\sim 2^{(n \log n)R},$ with $R$ representing the associated coding rate. Moreover, by employing the well-known Mahalanobis-distance decoder induced by colored Gaussian noise statistics, we characterize bounds on the identification capacity, with the resulting bounds parameterized by $\kappa$ and $\mu.$
	\end{abstract}
	% ---
	
	\IEEEpeerreviewmaketitle
	
	\vspace{0mm}
	% ---
	
	\section{Introduction}
	
	In the identification setting \cite{J85,AD89,AH08G}, encoding and decoding schemes are designed such that the receiver can decide, with vanishing error probabilities, whether a given message of interest was transmitted. In contrast to Shannon’s classical communication model \cite{S48}, which requires reliable reconstruction of the transmitted message from the entire message set, the identification framework restricts attention to a single pre-specified message, thereby reducing decoding to a binary hypothesis test on its presence.
	%Identification problem with and without randomness at the encoder for discrete alphabet channels under average power constraint is studied in \cite{AD89} and \cite{Salariseddigh_IT}. While discrete channels resemble the same codebook size scaling, i.e., of the order $\sim 2^{nR}.$
	A well-known phenomenon for deterministic identification (DI) \cite{AN99,Salariseddigh_PhD_Diss} across continuous-alphabet channels, including the Gaussian channel with fading \cite{Salariseddigh_IT,Vorobyev25,Yuan22}, Poisson channels with and without inter-symbol interference (ISI) \cite{Salariseddigh-TMBMC,Salariseddigh_OJCOMS_23}, affine Poisson channels \cite{Salariseddigh25_ITW}, and binomial channels \cite{Salariseddigh_Binomial_ISIT}, is the emergence of a \emph{super-exponential} codebook size scaling, i.e., of the order $\sim 2^{(n\log n)R}.$ The identification has received considerable attention in post-Shannon and semantic communication frameworks \cite{Salariseddigh23_BSC_Future_Internet}. Identification code constructions are discussed in \cite{Verdu93,Lengerke25,Zinoghli24,Derebeyouglu2020}. Generalized models of identification problem and their connection to the Shannon problem are discussed in \cite{AH08G,AADT20,Rosenberger2024,Rosenberger25}.
	
	The inter-symbol interference (ISI) Gaussian channel with colored noise constitutes a canonical model for modern wireless communication systems \cite{Proakis01,G05}. In this setting, temporal correlation in the noise, induced by filtering, co-channel interference, and hardware impairments, interacts with channel memory due to ISI, yielding a nontrivial impact on both capacity characterization and receiver design. From an information-theoretic standpoint, this interplay requires coding and decoding strategies that explicitly accommodate memory in both the channel and the noise, thereby guiding the design of robust communication schemes. The Shannon capacity of colored Gaussian channels with ISI is classically achieved via water-filling over the channel spectrum, as established by Gallager in \cite{RG68}. Subsequent work characterized the capacity of discrete-time Gaussian ISI channels under per-symbol average power constraints \cite{Hirt88,Hirt02}. Extensions to multiuser scenarios include the capacity region of the Gaussian broadcast channel with ISI and colored noise under input power constraints \cite{Goldsmith02}, as well as the capacity region of the two-user Gaussian multiple-access channel with ISI \cite{Cheng93}. More recently, attention has turned to models with stochastic and time-varying channel coefficients, further enriching the theoretical landscape \cite{Moshksar24}.

	In this paper, we study the identification problem over the Gaussian channels with correlated noise and ISI employing a deterministic encoder in the presence of peak power constraint. We note that as its special case, the color noise Gaussian channel can model white Gaussian channel, by choosing $\mu=0$ \cite{Salari26}. While identification capacity has been studied for ISI-free channels \cite{Salariseddigh_IT} and under white-noise assumptions \cite{Salari26}, to the best of the author's knowledge it has not yet been characterized for the general Gaussian channel with intersymbol interference and colored noise.

	\subsection{Notations}
	Blackboard letters $\mathbbmss{X,Y,Z},\ldots$ denote alphabet sets. Set difference is written as $\mathbbmss{X} \setminus \mathbbmss{Y}$. Lowercase letters $x, y, z, \ldots$ denote realizations of random variables (RVs), while uppercase letters $X, Y, Z, \ldots$ denote RVs. Lowercase bold symbols $\fx,\fy,\fz,\ldots$ represent row vectors. All logarithms are taken to base $2$. The set of integers from $1$ to $M$ is denoted by $[\![M]\!]$. The sets of non-negative real numbers and real numbers are denoted by $\mathbb{R}_{+}$ and $\mathbb{R}$, respectively. The Gamma function $\Gamma(x)$ for a non-negative integer $x$ is given by $\Gamma (x) = (x-1) ! \triangleq (x-1) \times \dots \times 1$. We adopt the standard \emph{Bachmann--Landau asymptotic notation}, including little-$o$ notation $o(\cdot)$, big-$O$ notation $\mathcal{O}(\cdot)$, and asymptotic equivalence $\sim$. The expectation of an RV $X$ is denoted by $\mathbb{E}[X]$. The variance of a real-valued RV $X$ with finite first moment is defined as $\text{Var}[X] = \mathbb{E}[(X-\mathbb{E}[X])^2]$, and the covariance between two real-valued RVs $X$ and $Y$ with finite second moments is defined as $\text{Cov}[X,Y] = \mathbb{E}[(X-\mathbb{E}[X])(Y-\mathbb{E}[Y])]$. For a random vector $\fX$, the expectation is defined as $\mathbb{E}[\fX]$. The covariance matrix of $\fX$ is defined as $\mathrm{Cov}[\fX]=\mathbb{E}[(\fX-\mathbb{E}[\fX])(\fX-\mathbb{E}[\fX])^T] = \mathbb{E}[\fX\fX^T] - \mathbb{E}[\fX]\mathbb{E}[\fX]^T$. The quantities $\norm{\mathbf{x}}$ and $\norm{\mathbf{x}}_{\infty}$ denote the $\ell_2$-norm and $\ell_{\infty}$-norm, respectively. An $n$-dimensional hypersphere with radius $r$ and center $\mathbf{x}_0$ is defined as $\S_{\fx_0}(n,r) = \{ \fx \in \mathbb{R}^n : \norm{\fx-\fx_0} \leq r \}$. The set $\mathbbmss{Q}_{\f0} = \{\fx \in \mathbb{R}^n : | x_t | \leq U, \forall \, t \in [\![n]\!] \}$ denotes an $n$-dimensional hypercube with edge length $U$ centered at the origin $\mathbf{0} = (0)_{t=1}^n$. The discrete-time Fourier transform (DTFT) of a finite-length sequence $(x_k)_{k=0}^{K-1}$ is the continuous periodic function of frequency $\phi$, defined as $X(\omega) = \sum_{k=0}^{K-1} x_k e^{-j\omega k}$. The identification capacity of a channel is denoted by $\mathbb{C}_{\rm I}$. Throughout this paper, $\bG$ denotes a discrete-time Gaussian channel with ISI and colored noise.

	\subsection{Organization}
	The remainder of this paper is organized as follows. Section~\ref{Sec.SysModel} presents the fundamental prerequisites of the communication systems modeled by the ISI Gaussian channels with colored noise $\bG.$ The main contributions related to $\bG$ are discussed in Section~\ref{Sec.Res}. Section~\ref{Sec.Interpretations} examines the implications of the capacity bounds. Finally, Section~\ref{Sec.Conclusion} concludes the paper and outlines directions for future research.
	
	\section{System Model and Coding Preliminaries}
	\label{Sec.SysModel}
	Here, we introduce the adopted system model and establish preliminaries for coding and capacity.
	
	% ----------------------
	\subsection{Colored Gaussian Channel}
	We consider a channel with $K$-tap ISI and additive colored Gaussian noise with covariance matrix $\fSigma$. The memory is described by a channel impulse response (CIR) sequence $\fh = (h_k)_{k=0}^{K-1}$, where $h_k \in \mathbb{R}$ for all $k$ is known as the CIR tap at time $k\,, \forall k \in [\![K-1]\!]$ with $h_0 h_{K-1} \neq 0$. Let $X_t \in \mathbb{R}$ and $Y_t \in \mathbb{R}$ denote the transmitted and received symbols at time $t$, respectively. The corresponding letter-wise channel law is given by
	% ---
	\begin{align}
		\label{Eq.Law_Letter}
		Y_t = X_t^{\fh} + Z_t,
	\end{align}
	% ---
	where the additive noise affecting the received signal is modeled by the random vector $\fZ,$ which follows a multivariate Gaussian distribution, i.e., $\fZ \sim \N(\f0_{\bar{n}\times\bar{n}},\fSigma_{\bar{n}\times\bar{n}})$ where the covariance matrix $\fSigma \in \mathbb{R}^{\bar{n} \times \bar{n}}$ characterizes the correlation between the noise samples with $\fSigma = [\Sigma_{t,t'}] = \text{Cov}[Z_t,Z_{t'}],\, \forall t,t' \in [\![\bar{n}]\!].$ The multivariate Gaussian distribution density of $\fZ$ reads
	% ---
	\begin{align}
		f_{\fZ}(\fz) & = \frac{1}{\sqrt{(2\pi)^n | \fSigma|}} \exp \Big( - (\fy - \fx^{\fh})^T \fSigma^{-1} (\fy - \fx^{\fh})/2 \Big) ,
	\end{align}
	% ---
	where $|\fSigma|$ is the determinant of $\fSigma.$ Since the channel exhibits dispersion, each output symbol depends on the $K$ most recent input symbols. Consequently, the receiver observes a sequence of length $\bar{n} = n + K - 1$, referred to as the output vector. Hence, based on the conditional distribution of $\bG$ in \eqref{Eq.Law_Letter}, the transition probability distribution can be expressed in the following compact form:
	% ---
	\begin{align}
		\mathbf{Y} = \fH \fx + \mathbf{Z},
	\end{align}
	% ---
	where $\fY$ and $\fZ$ are output and noise vector, i.e., $\fY = (Y_t)_{t=1}^{\bar{n}},$ and $\fZ = (Z_t)_{t=1}^{\bar{n}},$ and $\fH,$ is a full-rank convolution matrix with a Toeplitz structure, where $\fH_{\bar{n} \times n} = [h_{i-j}],$ with $h_k=0$ for $k<0$ or $k \geq K.$ Moreover, setting $\fH \fx = \fx^{\fh},$ we have $f_{\fZ}(\fz) = (2\pi)^{-n/2} | \fSigma|^{-1/2} \exp \big[ - \| \fSigma^{-1/2} ( \fy - \fx^{\fh} ) \|^2 / 2 \big].$ The codewords are subjected to constraint $|x_t| \leq P_{\rm max}, \forall t \in [\![n]\!],$ where $P_{\rm max}> 0$ constrain the per-symbol signal energy and $|x_t|$ is the absolute value of $x_t.$
	
	% ---
	\subsection{Identification Coding}
	In the following, we draw on the rigorous performance parameters for identification established in \cite{A06} and develop a refined and tailored formulation of the code definition and capacity for $\bG.$
	% ---
	\begin{definition}[Colored Gaussian identification code]
		\label{Def.ISI-Poisson-Code}
		An $(n, M(n,R), K(n,\kappa), e_1, e_2)$-DI code for $\bG$ under the peak power constraint $P_{\rm max}$, with integers $M(n,R)$ and $K(n,\kappa)$ and parameters $n$ (codeword length) and $R$ (coding rate), is defined as a system $(\mathbbmss{C}, \D)$ comprising a codebook $\mathbbmss{C} = \{\fc^i\}$ such that
		% ---
		\begin{align}
			\label{Ineq.Constraints}
			- P_{\rm max} \leq c_{i,t} \leq P_{\rm max},
		\end{align}
		% ---
		and a collection of decoding regions $\D = \{ \mathbbmss{D}_i \}, \forall i \in [\![M]\!],\,\forall t \in [\![n]\!].$ Two decoding error events may occur. These events correspond to type I and type II errors, respectively, and are given by
		% ---
		\begin{align}
			\label{Eq.TypeIError}
			P_{e,1}(i) & = \Pr \big( \fY \in \mathbbmss{D}_i^c \,\big|\, \fx = \fc_i \big) = 1 - \int_{\mathbbmss{D}_i} f_{\fZ}(\fy - \fc_i^{\fh}) \, d\fy ,
			\\
			P_{e,2}(i,j)& = \Pr \big( \fY \in \mathbbmss{D}_j \,\big|\, \fx = \fc_i \big) = \int_{\mathbbmss{D}_j} f_{\fZ}(\fy - \fc_i^{\fh}) \, d\fy .
			\label{Eq.TypeIIError}
		\end{align}
		% ---
		It must hold that $P_{e,1}(i) \leq e_1$ and $P_{e,2}(i,j) \leq e_2, \forall \, i,j \in [\![M]\!]$ such that $i \neq j, \allowbreak \, \forall e_1, \allowbreak e_2 \allowbreak > 0.$
		\qed
	\end{definition}
	% ---
	\begin{definition}[Colored Gaussian identification capacity] A rate $R>0$ is said to be DI-achievable if, for any $e_1, e_2>0$ and sufficiently large $n$, there exists an $(n, M(n,R), K(n,\kappa), e_1, e_2)$-DI code. The operational DI capacity of the colored Gaussian channel $\bG$ is then defined as the supremum of all such achievable rates and is denoted by $\mathbb{C}_{\text{I}}(\bG)$.
	\qed
	\end{definition}
	% ---
	
	% ---
	\section{Identification Capacity of the Colored Gaussian Channel with ISI}
	\label{Sec.Res}
	Here, we present our main capacity theorem with the achievability and the converse proofs.
	
	% ---
	\subsection{Main Results}
	\label{Subsec.Main_Results}
	
	First, we introduce a class of CIRs $\fh$ defined through three rigorously specified conditions, each of which include essential criteria for ensuring reliable identification.
	% ---
	\begin{itemize}[leftmargin=*]
		\item \textbf{\textcolor{mycolor12}{C1 (Stability Constraint):}} We assume that the CIR features a finite energy: $\sum_{k=0}^{K-1} |h_k| < \infty,$ which implies: $|h_k| \leq L < \infty\,, \forall k \in [\![K-1]\!].$
		\item \textbf{\textcolor{mycolor12}{C2 (Frequency Spectrum):}} Let $H(\omega)$ be the discrete-time Fourier transform (DTFT) of the CIR vector $\fh.$ Then, we assume that $\inf_{\omega \in [-\pi,\pi]} |H(\omega)| > 0.$
		\item \textbf{\textcolor{mycolor12}{C3 (Covariance Matrix):}} We assume that the covariance matrix $\fSigma$ is full rank and that its singular values of the covariance matrix $\fSigma$ lie in a polynomial range, that is, $\fSigma$ is polynomially well conditioned. More specifically, $\fSigma$ fulfills: $\sigma_{\rm min}(\fSigma) \in \Omega(n^{-\mu})$ and $\sigma_{\rm max}(\fSigma) \in \mathcal{O}(n^{\mu/2}),$
		% $$n^{-\mu} \lesssim \sigma_t(\fSigma) \lesssim n^{\mu/2},$$
	    where $\mu \in [0,1/2)$ is referred to as the spectrum rate.
		% This implies that the convolution in the frequency domain is invertible and numerically stable.
	\end{itemize}
	% ---

	% ---
	\begin{theorem}
		\label{Th.Colored-Capacity}
		Consider the colored Gaussian channel, $\bG,$ with CIR $\fh$ and covariance matrix $\fSigma$ fulfilling conditions \textbf{C1}-\textbf{C3} and assume that the number of ISI channel taps grows sub-linearly with the codeword length, i.e., $K(n,\kappa) = n^{\kappa},$ where $\kappa \in [0,1/2)$ and $\kappa + \mu \in [0,1/2).$ Then, the identification capacity of $\bG$ subject to peak power constraint according to Definition~\ref{Def.ISI-Poisson-Code} and in the super-exponential codebook size scale, i.e., $M(n,R) = 2^{(n\log n)R},$ reads
		% ---
		\begin{align}
			\label{Ineq.LU}
			\frac{1 - 2(\kappa + \mu)}{4} \leq \mathbb{C}_{\rm I}(\bG) \leq 1+\kappa + \frac{\mu}{2} .
		\end{align}
	\end{theorem}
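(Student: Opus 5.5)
The proof splits into an achievability part, which yields the lower bound $\frac{1-2(\kappa+\mu)}{4}$, and a converse part, which yields the upper bound $1+\kappa+\frac{\mu}{2}$. Both follow the standard route for deterministic identification over continuous channels, as in \cite{Salariseddigh_IT,Salari26}, adapted to the whitened channel $\fSigma^{-1/2}\fY = \fSigma^{-1/2}\fH\fx + \tilde{\fZ}$ with $\tilde{\fZ}\sim\N(\f0,\fI)$.

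\emph{Achievability.} We build the codebook by a sphere-packing argument inside the hypercube $\mathbbmss{Q}_{\f0}$ with $U=P_{\rm max}$.
\begin{enumerate}[leftmargin=*]
\item Take a packing of balls of radius $r_0=\sqrt{n\epsilon_n}$, with $\epsilon_n = a\, n^{-(1-2(\kappa+\mu)-b)/2}$ for small $b>0$. Its density is at least $2^{-n}$ (Minkowski--Hlawka), so $M \geq 2^{-n}(2U)^n/\mathrm{Vol}(\S(n,r_0))$.
\item By Stirling's formula, $\log M \gtrsim \frac{n}{2}\log(1/\epsilon_n) - O(n) = \frac{1-2(\kappa+\mu)-b}{4}\, n\log n - O(n)$. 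This gives the rate.
\item Decoder: for message $j$, the Mahalanobis-distance test
\[
\mathbbmss{D}_j=\Bigl\{\fy:\ \bigl\|\fSigma^{-1/2}(\fy-\fH\fc_j)\bigr\|^2 \leq \bar n(1+\delta_n)\Bigr\}.
\]
\item Type I error: $\|\tilde{\fZ}\|^2$ is chi-square with $\bar n$ degrees of freedom, so Chebyshev's inequality gives $P_{e,1}\leq 2/(\bar n\delta_n^2)\to0$ provided $\delta_n \gg n^{-1/2}$.
\item Type II error: expand $\|\fSigma^{-1/2}(\fH(\fc_i-\fc_j)+\fZ')\|^2$. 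This requires the separation
\[
\bigl\|\fSigma^{-1/2}\fH(\fc_i-\fc_j)\bigr\|^2 \geq \sigma_{\max}(\fSigma)^{-1}\,\sigma_{\min}(\fH)^2\, \|\fc_i-\fc_j\|^2 \gtrsim n^{-\mu/2}\,\inf_\omega|H(\omega)|^2\, n\epsilon_n .
\]
The bound $\sigma_{\min}(\fH)\geq \inf|H|$ comes from \textbf{C2}, since the full Toeplitz convolution matrix has $\fH^T\fH$ with symbol $|H(\omega)|^2$.
\item The cross term $2\langle \fSigma^{-1/2}\fH\mathbf{d}, \tilde{\fZ}\rangle$ has variance $4\|\fSigma^{-1/2}\fH\mathbf{d}\|^2$. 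It is controlled by Chebyshev's inequality, which gives $P_{e,2}\to 0$ once the separation exceeds $\bar n\delta_n$ by enough margin.
\item The $K=n^\kappa$ taps enter through \textbf{C1}: $\|\fH\|\leq KL$ and $\bar n=n+n^\kappa-1$. The factors $n^{\kappa}$ and $n^{\mu}$ are absorbed by choosing $\delta_n$ and $\epsilon_n$ with exponents in the stated window. This is why the rate is reduced from the white case $\frac14$ by $\frac{\kappa+\mu}{2}$.
\end{enumerate}

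\emph{Converse.} The aim is to show that codewords of a good code must be separated.
\begin{enumerate}[leftmargin=*]
\item Claim: $\|\fH(\fc_i-\fc_j)\|\geq \sqrt{n\epsilon_n'}$ with $\epsilon_n' = n^{-2(1+b)}\cdot n^{-2\kappa-\mu}$ (or similar).
\item Proof of the claim, by contradiction: if two received means are closer than this, the total variation distance between $\N(\fH\fc_i,\fSigma)$ and $\N(\fH\fc_j,\fSigma)$ is bounded by $\|\fSigma^{-1/2}\fH(\fc_i-\fc_j)\|/2$. By \textbf{C3} this is at most $n^{\mu/2}\|\fH(\fc_i-\fc_j)\|/2$, which tends to $0$.
\item Then $1-e_1 \leq P(\mathbbmss{D}_i|\fc_i)\leq P(\mathbbmss{D}_i|\fc_j)+o(1)\leq e_2+o(1)$, which contradicts $e_1+e_2<1$.
\item Transfer to the inputs: separation of the received means $\fH\fc_i$ gives separation of the inputs $\fc_i$, with a loss factor $\|\fH\|\leq n^\kappa L$.
\item Count: balls of radius $\sqrt{n\epsilon_n'}/2$ around the codewords are disjoint inside the slightly enlarged cube. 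A volume ratio gives $\log M \leq \frac{n}{2}\log(1/\epsilon_n')+O(n) = (1+\kappa+\frac{\mu}{2}+b)\,n\log n$.
\end{enumerate}

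\emph{Main obstacle.} The delicate step is the type II error analysis in the achievability part. The achievable separation $n^{1-\mu/2}\epsilon_n$ must dominate the fluctuation of order $\sqrt{\bar n}$ while still allowing $\epsilon_n$ small enough for the rate. The spectral bounds of \textbf{C2}–\textbf{C3} must also be applied uniformly in $n$ even though $K$ grows. For $\sigma_{\min}(\fH)$ I would first try the Toeplitz/DTFT argument $\|\fH\fx\|_2=\|H(\omega)X(\omega)\|_{L^2}\geq \inf|H|\,\|\fx\|$, which is exact for the full (tall) convolution matrix by Parseval. That makes the uniformity in $K$ automatic.
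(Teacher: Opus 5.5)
Your proposal is correct. It has the same architecture as the paper's proof: for achievability, a saturated packing in the cube, a Mahalanobis threshold decoder and Chebyshev bounds for both error types; for the converse, a minimum-distance lemma by contradiction followed by volume counting. The two key estimates, however, are done differently.

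\textbf{Type II error.} The paper splits the statistic into the events $\mathbbmss{E}_0,\mathbbmss{E}_1$. It bounds the cross-term variance absolutely, through $\|\fc_i^{\fh}-\fc_j^{\fh}\|^2\le 4\bar n K^2L^2P_{\rm max}^2$ and $\sigma_{\max}(\fSigma^{-1})\lesssim n^{\mu}$. The resulting factor $n^{2\kappa+\mu}$ is what makes the paper's lower bound depend on $\kappa$ at all. You instead compare the exact variance $2\bar n+4D$ of the noncentral chi-square statistic, with $D=\|\fSigma^{-1/2}\fH(\fc_i-\fc_j)\|^2$, against the mean gap $D-\bar n\delta_n$. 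This self-normalized bound needs only a lower bound on $D$, so $\|\fH\|\le KL$ never enters. Your item~7 therefore misattributes the loss. In your own argument only the $n^{-\mu/2}$ coming from $\sigma_{\max}(\fSigma)$ costs rate, so any $\epsilon_n\gg n^{-(1-\mu)/2}$ works. The stated lower bound then follows a fortiori from your more conservative choice of $\epsilon_n$.

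\textbf{Converse.} The paper compares the two densities pointwise on $\mathbbmss{D}_{i_1,i_2}$. There the crude bound $2\|(\fy-\fc_{i_2}^{\fh})_{\rm w}\|\,\|\fd_{i,j}\|\le 2\sqrt{\bar n(1+\zeta)}\,\|\fd_{i,j}\|$ forces the extra factor $\bar n^{-1/2}$ in $\alpha_n$. The paper then packs the convolved codewords in $\mathbb{R}^{\bar n}$ inside a cube of side about $2KLP_{\rm max}$, which is where $\kappa$ enters. Your Pinsker/total-variation step is cleaner and only requires $n^{\mu/2}\|\fH(\fc_i-\fc_j)\|\to0$. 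Your transfer to the inputs via $\|\fH\|\le KL$, with packing in $\mathbb{R}^n$, produces the same $\kappa$ penalty in different coordinates.

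There is one bookkeeping slip in your converse. As written, Claim~1 already puts $n^{-2\kappa}$ into the output separation. Step~4 then loses a further factor $n^{\kappa}$. Step~5 uses balls of radius $\sqrt{n\epsilon_n'}/2$ around the input codewords, but these are not guaranteed to be disjoint; counted correctly, this route gives $1+2\kappa+\frac{\mu}{2}$.

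The fix is to state Claim~1 with $n^{-2(1+b)-\mu}$, which your TV argument proves, so that $n^{-2\kappa}$ enters only once, through the transfer. Then balls of radius $\sqrt{n\epsilon_n'}/(2L)$ around the $\fc_i$ are disjoint in the slightly enlarged cube, and the count gives exactly $1+\kappa+\frac{\mu}{2}+b$. The TV argument would also allow the output threshold $n^{-\mu/2-b}$, which would give the even sharper bound $\frac12+\kappa+\frac{\mu}{2}$.
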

	% ---
	\begin{proof}
		Proofs for achievability and converse are provided in Subsections~\ref{Subsec.Achievability} and \ref{Subsec.Converse}, respectively.
	\end{proof}
	% ---

	In the following, we provide the achievability proof of Theorem~\ref{Th.Colored-Capacity}.
	% ---
	\subsection{Achievability}
	\label{Subsec.Achievability}
	The proof mirrors mainly the same line of construction as of the white Gaussian channel \cite{Salari26}.
	
	\textbf{\textcolor{mycolor12}{Codebook Construction:}} In the following, we deal with an original codebook $\mathbbmss{C} = \{ \fc_i \} \subset \mathbb{R}^n,$ with $i \in [\![M]\!]$  induced by the peak power constraint and an auxiliary codebook referred to as the convoluted codebook denoted by $\mathbbmss{C}^{\fh} = \{ \mathbf{c}_i^{\fh} \} \subset \mathbb{R}^{\bar{n}} ,$ with $i \in [\![M]\!],$ where each $\mathbf{c}_i^{\fh} \triangleq (c_{i,1}^{\fh},\ldots,c_{i,\bar{n}}^{\fh})$ is referred to as a convoluted codeword with
	% ---
	\begin{align}
		\label{Eq.Convoluted-Symbol}
		c_{i,t}^{\fh} \triangleq \sum_{k=0}^{K-1} h_k c_{i,t-k},
	\end{align}
	% ---
	where $c_{i,t} = 0,\, \forall t \leq 0.$ Next, let define the original and the convoluted codebooks as follows:
	% ---
	\begin{align}
		\mathbbmss{C} = \mathbbmss{Q}_{\f0}(n,2P_{\,\text{max}}) & \triangleq \big\{ \fc_i \in \mathbb{R}^n:\; - P_{\rm max} \leq c_{i,t} \leq P_{\rm max} , \forall \, i \in [\![M]\!] , \forall \, t \in [\![n]\!] \big\} ,
		\\
		\mathbbmss{C}^{\fh} & \triangleq \big\{ \fc_i^{\fh} \in \mathbb{R}^{\bar{n}} :\, c_{i,t}^{\fh} \triangleq \sum_{k=0}^{K-1} h_k c_{i,t-k}:\, \fc_i \in \mathbbmss{C}, \forall \, i \in [\![M]\!] \big\} .
	\end{align}
	% ---
	
	\begin{customlemma}{1}[minimum distance of the convoluted codebook]
		\label{Lem.Min_dis_con_cb}
		Let $H(\omega)$ denote the DTFT of the CIR vector corresponding to $\bG$. Then, the minimum distance of the convolved codebook $\mathbbmss{C}^{\fh}$ satisfies:
		% ---
		\begin{align}
			\label{Ineq.LB_Norm_Conv_CB}
			\| \fc_i^{\fh} - \fc_j^{\fh} \| \geq H_{\rm min} \| \fc_i - \fc_j \|,
		\end{align}
		% ---
		where $H_{\rm min} \triangleq \inf_{\omega \in [0,2\pi]} |H(\omega)|/2\pi.$
		% ---
		\begin{proof}
			The proof provided in the proof of \cite[Lem. 1]{Salari26}.
		\end{proof}
		% ---
	\end{customlemma}
	% ---
	
	\textbf{\textcolor{mycolor12}{Rate Analysis:}}
	We use a packing arrangement of non-overlapping hyper spheres of radius $r_0 = \sqrt{\bar{n}\epsilon_n}$ in a hyper cube with edge length $P_{\rm max},$ where
	% ---
	\begin{align}
		\label{Eq.Epsilon_n}
		\epsilon_n = \frac{a}{H_{\rm min}^2n^{( 1 - (2\kappa + 2\mu + b))) / 2}},
	\end{align}
	% ---
	with $a > 0$ being a fixed constant and $b$ denoting an arbitrarily small constant.
	
	Let $\mathscr{S}$ denote a sphere packing, i.e., an arrangement of $M$ non-overlapping spheres $\S_{\fc_i}(n,r_0),\, i \in [\![M]\!],$ that are packed inside the larger cube $\mathbbmss{Q}_{\f0}(n,P_{\rm max}).$ Following the same approach as presented for the white Gaussian channel \cite{Salari26} we conform to a relaxed geometric structure, we require only that the centers of the spheres lie within the hypercube $\mathbbmss{Q}_{\f0}(n,P_{\rm max}),$ that the spheres are mutually disjoint, and that each sphere exhibits a non-empty intersection with $\mathbbmss{Q}_{\f0}(n,P_{\rm max}).$ The packing density \cite{CHSN13} is
	% ---
	\begin{align}
		\Updelta_n(\mathscr{S}) \triangleq \frac{\text{Vol}\Big(\mathbbmss{Q}_{\f0}(n,P_{\rm max}) \cap \bigcup_{i=1}^{M}\S_{\fc_i}(n,r_0) \Big)}{\text{Vol}\big(\mathbbmss{Q}_{\f0}(n,P_{\rm max}) \big)}.
		\label{Eq.Def_Density}
	\end{align}
	% ---
	% ---
	We invoke a saturated packing argument as accomplished in \cite{Salari26}. Specifically, consider a saturated packing of spheres \(\bigcup_{i=1}^{M(n,R)} \S_{\fc_i}(n,r_0)\) with radius \( r_0 = \sqrt{\bar{n}\epsilon_n} \), embedded within the hypercube \(\mathbbmss{Q}_{\f0}(n,P_{\rm max})\). In general, the volume of a hypersphere of radius \( r \) is given by \cite[Eq.~(16)]{CHSN13},
	% ---
	\begin{align}
		\text{Vol}\big(\S_{\fc_i}(n,r)\big) = \frac{\pi^{\frac{n}{2}}}{\Gamma(\frac{n}{2}+1)} \cdot r^{n} .
		\label{Eq.VolS}
	\end{align}
	% ---
	Note that the density of such an arrangement fulfills \cite[Sec.~IV]{Salariseddigh-TMBMC}
	% ---
	\begin{align}
		\label{Ineq.Density}
		2^{-n} \leq \Updelta_n(\mathscr{S}) \leq 2^{-0.599n} .
	\end{align}
	% ---
	We associate each hypersphere with a codeword located at its center $\mathbf{c}_i$, where $\|\mathbf{c}_i\|_{\infty} \leq P_{\mathrm{max}}$. Given that each sphere has volume $\mathrm{Vol}(\mathcal{S}_{\mathbf{c}_1}(n,r_0))$ and all centers lie within $\mathbbmss{Q}_{\mathbf{0}}(n,P_{\mathrm{max}})$, the number of packed spheres, $M$, reads
	% ---
	\begin{align}
		\label{Eq.M}
		M = \frac{\text{Vol}\big(\bigcup_{i=1}^{M}\S_{\fc_i}(n,r_0)\big)}{\text{Vol}(\S_{\fc_1}(n,r_0))} & \geq \frac{\text{Vol}\big(\mathbbmss{Q}_{\f0}(n,P_{\rm max}) \cap \bigcup_{i=1}^{M}\S_{\fc_i}(n,r_0) \big)}{\text{Vol}(\S_{\fc_1}(n,r_0))}
		\stackrel{(a)}{\geq} \frac{(P_{\rm max}/2)^n}{\text{Vol}(\S_{\fc_1}(n,r_0))} ,
	\end{align}
	% ---
	where $(a)$ exploits \eqref{Eq.Def_Density} and \eqref{Ineq.Density}. The bound in \eqref{Eq.M} admits the following simplification
	% ---
	\begin{align}
		\label{Eq.Log_M_0}
		\log M \stackrel{(a)}{\geq} n \log P_{\rm max} - n \log r_0 + \floor{n/2} \log \floor{n/2} - \floor{n/2} \log e + o \big( \floor{n/2} \big) - n ,
	\end{align}
	% ---
	where $(a)$ uses \eqref{Eq.VolS} and Stirling's approximation, namely, $\log n! = n \log n - n \log e + o(n)$ \cite[P.~52]{F66} with setting $n$ with $\floor{n/2} \in \mathbb{Z},$ and since $\Gamma (( n/2) + 1 ) \geq \floor{n/2} !$ see \cite{Salari26} for details. Now, observe
	% ---
	\begin{align}
		r_0 = \allowbreak \sqrt{\bar{n}\epsilon_n} \sim \sqrt{n\epsilon_n} = \sqrt{a}H_{\rm min}^{-1}n^{(1 + 2\kappa + 2\mu + b) / 4}.
	\end{align}
	% ---
	Accordingly, we arrive at the following bound on the logarithm of $M,$
	% ---
	\begin{align}
		\log M  \geq \left( \frac{2 - (1 + 2\kappa + 2\mu + b)}{4} \right) n \log n + n \log \Big( \frac{P_{\rm max}H_{\rm min}}{\sqrt{ae}} \Big)  + \mathcal{O}(n) ,
		\label{Eq.Log_M}
	\end{align}
	% ---
	see \cite{Salari26} for detailed derivations. Consequently, the leading-order term in \eqref{Eq.Log_M} is of order $n \log n$. Ensuring that the derived lower bound on the achievable rate, $R,$ remains finite as $n \to \infty,$ requires a corresponding scaling of $M.$ In particular, $M$ must scale as $M = 2^{(n \log n)R}.$ Therefore,
	
	% ---
	\begin{align}
		R \geq \frac{1}{n \log n} \left[ \left( \frac{2 - (1 + 2\kappa + 2\mu+ b)}{4} \right) n \log n + n \log \Big( \frac{P_{\rm max}H_{\rm min}}{\sqrt{ae}} \Big) + o(n \log n) \right] ,
	\end{align}
	% ---
	which tends to $(1-2(\kappa + \mu))/4$ when $n \to \infty$ and $b \rightarrow 0.$
	% ---

	\textbf{\textcolor{mycolor12}{Encoding:}} We assume that the encoding function is deterministic, i.e., each message $i \in [\![M]\!]$ is associated to a known codeword $\fc_i.$ Hence, given $i \in [\![M]\!],$ the transmitter sends $\fx = \fc_i.$
	
	\textbf{\textcolor{mycolor12}{Decoding:}}
	Let $e_1, e_2, \eta_0, \zeta_0, \zeta_1 > 0$ be arbitrarily small constants. Before proceeding, we set the following conventions to ensure a clear and focused analysis:
	% ---
	\begin{itemize}[leftmargin=*]
		\item $Y_t(i) = c_{i,t}^{\fh} + Z_t,\, \forall t \in [\![\bar{n}]\!]$ denotes the channel output at time $t$ \emph{conditioned} that $\fx=\fc_i$ was sent.
		\item $\fZ = \fY(i) - \fc_j^{\fh}$ denotes the colored noise vector.
		\item $\fZ_{\rm w} \triangleq \fSigma^{-1/2} \fZ$ denotes the whitened noise vector.
		\item The output vector consists of the symbols, i.e., $\fY(i)= (Y_1(i),\ldots, Y_{\bar{n}}(i))$ with $\bar{n} = n+K-1.$
		\item $c_{i,t}^{\fh} \triangleq \sum_{k=0}^{K-1} h_k c_{i,t-k}$ is the convoluted symbol, i.e., the linear combination of $\fc_i$ and $\fh.$
		\item $\delta_n \hspace{-.4mm} = \hspace{-.4mm} 4aC_{\sigma_{\rm max}} / 3n^{(1 - (2\kappa + \mu + b))/2}$ is \emph{decoding threshold} with $a,b>0$ being fixed and arbitrary constants.
		\item The frequency response is bounded away from zero over its support: $H_{\rm min} \triangleq \inf_{\omega \in [0,2\pi]} |H(\omega)| > 0.$
	\end{itemize}
	% ---
	To determine if message $j \in [\![M]\!]$ was sent, the decoder checks if $\mathbf{y}$ lies in the decoding set:
	% ---
	\begin{align}
		\label{Eq.Dec_Meas0}
		\mathbbmss{D}_j = \Big\{ \fy \in \mathbb{R}^{\bar{n}} \,:\; |T(\fy,\fc_j^{\fh})| \leq \delta_n \Big\},
	\end{align}
	% ---
	with $T(\fy,\fc_j^{\fh}) = \bar{n}^{-1} (\fy - \fc_j^{\fh})^T \fSigma^{-1} (\fy - \fc_j^{\fh}) - 1$ being referred to as the \emph{decoding measure} where
	% ---
	\begin{align}
		\label{Eq.d_mah}
		\bar{n}^{-1} \big( (\fy - \fc_j^{\fh})^T \fSigma^{-1} (\fy - \fc_j^{\fh}) \big) ,
	\end{align}
	% ---
	% ---
	%\begin{align}
	%	\label{Eq.Dec_Meas1}
	%	T(\fy,\fc_j^{\fh}) = \bar{n}^{-1} \big( (\fy - \fc_j^{\fh})^T \fSigma^{-1} (\fy - \fc_j^{\fh}) \big) - 1 = \bar{n}^{-1} \| \fSigma^{-1/2} ( \fy - \fc_j^{\fh} ) \| - 1 ,
	%\end{align}
	% ---
	% $T(\fy,\fc_j^{\fh}) = \bar{n}^{-1} \big\| \fy -  \fc_j^{\fh} \big\|^2 - \bar{n}^{-1} \sum_{t=1}^{\bar{n}} \sigma_{Z_t}^2$ 
is the normalized squared \emph{Mahalanobis distance} \cite{Mahalanobis18} between output $\fy$ and its mean $\fc_j$ under $f_{\fZ}(\fz)$.
	%$\text{tr}(\fSigma) = \sum_{t=1}^{\bar{n}} \sigma_{Z_t}^2$ being the trace of the covariance matrix.
	
	To simplify notation, we adopt the following definitions throughout the error analysis:
	% ---
	\begin{itemize}[leftmargin=*]
		\item $\fd_{j} \triangleq \| \fZ_{\rm w} \| = \| \fSigma^{-1/2} ( \fY(i) - \fc_j^{\fh} ) \|.$
		\item $T(\fY(i),\fc_j^{\fh}) = \bar{\fd}_j^2 - 1$ with $\bar{\fd}_j^2 \triangleq \bar{n}^{-1} \fd_{j}^T \fd_{j}  = \bar{n}^{-1}  \| \fZ_{\rm w} \|^2 = \bar{n}^{-1} (\fY(i) - \fc_j^{\fh})^T \fSigma^{-1} (\fY(i) - \fc_j^{\fh}).$
		\item $\fd_{i,j} \triangleq \fSigma^{-1/2} ( \fc_i^{\fh} - \fc_j^{\fh} )$ and $\ff_{i,j} \triangleq \fSigma^{-1/2} \fd_{i,j}.$
		\item $U_{i,j} \triangleq \bar{n}^{-1} \big(\big\| \fZ_{\rm w} \big\|^2 + \big\| \fd_{i,j}  \big\|^2 \big).$
		\item $V_{i,j} \triangleq 2\bar{n}^{-1} \fZ_{\rm w}^T \fSigma^{-1} ( \fc_i^{\fh} - \fc_j^{\fh} ).$
		\item $W_{i,j} \triangleq U_{i,j} + V_{i,j}.$
		\item $\mathbbmss{E}_0 \triangleq \{ |V_{i,j}| > \delta_n \} = \big\{ \fZ \in \mathbb{R}^{\bar{n}} \;:\, \big|  2\bar{n}^{-1} \fZ_{\rm w}^T \fSigma^{-1} ( \fc_i^{\fh} - \fc_j^{\fh} ) \big| > \delta_n \big\}.$
		\item $\mathbbmss{E}_1 \triangleq \{ U_{i,j} - 1 \leq 2\delta_n \} = \big\{ \fZ \in \mathbb{R}^{\bar{n}} \;:\, \bar{n}^{-1} \big( \big\| \fZ_{\rm w} \big\|^2 + \big\| \fd_{i,j}  \big\|^2 \big) - 1 \leq 2\delta_n \big) \big\}.$
		\item $\mathbbmss{E}_2 \triangleq \{ W_{i,j} - 1 \leq \delta_n \} = \big\{ \fZ \in \mathbb{R}^{\bar{n}} \;:\, \bar{n}^{-1} \big\| \fZ_{\rm w} + \fd_{i,j} \big\|^2 - 1 \leq \delta_n \big\}.$
	\end{itemize}
	% ---
	\textbf{\textcolor{mycolor12}{Type I:}}
	The type I errors occur when the transmitter sends $\fc^i,$ yet $\fY \notin \mathbbmss{D}_i.$ For every $i \in [\![M]\!],$ the type I error probability is bounded by 
	% ---
	\begin{align}
		\label{Eq.TypeIError}
		P_{e,1}(i) = \Pr\big( \fY(i) \in \mathbbmss{D}_i^c \big) = \Pr\big( T(\fY(i),\fc_i^{\fh}) > \delta_n \big).
	\end{align}
	% ---
	To bound $P_{e,1}(i),$ we perform Chebyshev's inequality, namely,
	% ---
	\begin{align}
		\label{Ineq.TypeI_Cheb}
		\Pr\big( \big| T(\fY(i),\fc_i^{\fh}) - \mathbb{E} \big[ T(\fY(i),\fc_i^{\fh}) \big] \big| > \delta_n \big) \leq \frac{\text{Var} \big[ T(\fY(i),\fc_i^{\fh}) \big]}{\delta_n^2} .
	\end{align}
	% ---
	Next, to calculate the expectation of the decoding measure, we exploit a helpful lemma.
	% ---
	\begin{customlemma}{2}
		\label{Lem.WNA}
		The squared Mahalanobis distance $(\fY - \fx^{\fh})^T \fSigma^{-1} (\fY - \fx^{\fh})$ follows a chi-squared distribution \cite{Papoulis02} with $\bar{n}$ degree of freedom, i.e., $(\fY - \fx^{\fh})^T \fSigma^{-1} (\fY - \fx^{\fh}) \sim \chi_{\bar{n}}^2.$
	\end{customlemma}
	% ---
		% ---
	\begin{proof}
		The proof is provided in Appendix \ref{App.WNA}.
	\end{proof}
	% ---
	
	Now, we start to calculate the expectation of the decoding measure as follows
	% ---
	\begin{align}
		\label{Ineq.Exp_Decoding_Metric}
		\mathbb{E} \big[ T( \fY(i),\fc_i^{\fh}) \big] &	\stackrel{(a)}{=} \bar{n}^{-1} \mathbb{E} [  \| \fZ_{\rm w} \|^2 ] - 1
		\stackrel{(b)}{=} \bar{n}^{-1}  \sum_{t=1}^{\bar{n}} \mathbb{E}[Z_{\rm w,t}^2] - 1
		\stackrel{(c)}{=} \bar{n}^{-1}  \sum_{t=1}^{\bar{n}} 1 - 1 = 0,
	\end{align}
	% ---
	where $(a)$ uses Lemma \ref{Lem.WNA}, with setting $\fY = \fY(i)$ and $\fx = \fc_j,$ i.e., $\fd_{j}^2 \sim \chi_{\bar{n}}^2,$ $(b)$ uses the linearity of the expectation and $(c)$ exploits $Z_{\rm w,t} \overset{\text{\tiny i.i.d}}{\sim} \N(0,1)$ with $\text{Var}[Z_{\rm w,t}] =  \mathbb{E}[Z_{\rm w,t}^2] = 1.$ Second, the variance of the decoding measure is given by
	% --- $\text{Var}[\fZ_{\rm w}] = \mathbb{E}[\fZ_{\rm w}^2] = \sum_{t=1}^{\bar{n}} \mathbb{E}[Z_{\rm w,t}^2] = \sum_{t=1}^{\bar{n}} 1 = \bar{n}.$
	\begin{align}
		\label{Ineq.Var_Decoding_Metric}
		\text{Var}\big[ T(\fY(i),\fc_i^{\fh}) \big] = \bar{n}^{-2} \text{Var}[\| \fZ_{\rm w} \|^2 ] \stackrel{(a)}{=} \bar{n}^{-2} \sum_{t=1}^{\bar{n}} \text{Var} [Z_{\rm w,t}^2] 
		& \stackrel{(b)}{=} \bar{n}^{-2} \big( \sum_{t=1}^{\bar{n}} 3\sigma_{Z_{\rm w,t}}^4 - \sigma_{Z_{\rm w,t}}^2 \big) = 2\bar{n}^{-1},
	\end{align}
	% ---
	where $(a)$ invokes $Z_{\rm w,t} \overset{\text{\tiny i.i.d}}{\sim} \N(0,1)$ and $(b)$ holds since $\text{Var}[Z_{\rm w,t}^2] = \mathbb{E}[Z_{\rm w,t}^4] - (\mathbb{E}[Z_{\rm w,t}^2])^2 $ and $\mathbb{E}[Z_t^4] = 3\sigma_{Z_t}^4$ for $Z_t \overset{\text{\tiny i.i.d}}{\sim} \N(0,\sigma_{Z_t}^2)$ with setting $Z_t = Z_{\rm w,t}.$
	% $$\text{Var}\Big(\sum_{t=1}^{\bar{n}} X_t \Big) = \sum_{t=1}^{\bar{n}} \sum_{t'=1}^{\bar{n}} \text{Cov}(X_t,X_{t'})$$ $\text{Cov}[X,Y] \leq \sqrt{\text{Var}[X] \cdot \text{Var}[Y]}$ for RVs with finite variances	
	Thereby, employing \eqref{Ineq.Exp_Decoding_Metric} and \eqref{Ineq.Var_Decoding_Metric} into \eqref{Ineq.TypeI_Cheb} yields
	% ---
	\begin{align}
		\label{Ineq.TypeI_Final}
		P_{e,1}(i) & \stackrel{(a)}{\leq} \frac{\text{Var} \big[ T(\fY(i),\fc_i^{\fh}) \big]}{\delta_n^2}
		\leq \frac{2}{\bar{n}\delta_n^2}\stackrel{(b)}{\leq} \frac{9}{8a^2C_{\sigma_{\rm max}}^2 n^{2\kappa + \mu + b}} \triangleq \eta_0,
	\end{align}
	% ---
	where $(a)$ employs the Chebyshev's inequality and $(b)$ uses $\delta_n = 4aC_{\sigma_{\rm max}} / 3n^{(1 - (2\kappa + \mu + b))/2}$ and $n \leq \bar{n}.$ Hence, $P_{e,1}(i) \leq \eta_0 \leq e_1$ holds for sufficiently large $n$ and arbitrarily small $e_1 > 0.$
	
	\textbf{\textcolor{mycolor12}{Type II:}}
	We examine type II errors, i.e., when $\fY \in \mathbbmss{D}_j$ while the transmitter sent $\fc_i$ with $i \neq j \,.$ Then, for every $i,j \in [\![M]\!],$ the type II error probability is given by
	% --- 
	\begin{align}
		P_{e,2}(i,j) = \Pr \big( \big| T(\fY(i);\fc_j^{\fh}) \big| \leq \delta_n \big).
		\label{Eq.Pe2G}
	\end{align}
	% ---
	Next, exploiting the reverse triangle inequality, i.e., $|W_{i,j}| - |1| \leq |W_{i,j} - 1|,$ we obtain
	% ---
	\begin{align}
		& P_{e,2}(i,j) \leq \Pr\big( |W_{i,j}| - |1| \leq \delta_n \big)
		\stackrel{(a)}{=} \Pr\big( W_{i,j} - 1 \leq \delta_n \big) \stackrel{(b)}{=} \Pr(\mathbbmss{E}_2),
	\end{align}
	% ---
	where $(a)$ follows since $W_{i,j} \geq 0,$ and $(b)$ holds by the following argument:
	% ---
	\begin{align}
		\fd_{i,j}^2 = (\fY(i) - \fc_j^{\fh})^T \fSigma^{-1} (\fY(i) - \fc_j^{\fh}) & = (\fY(i) - \fc_i^{\fh} + \fc_i^{\fh} - \fc_j^{\fh})^T \fSigma^{-1} (\fY(i)  - \fc_i^{\fh} + \fc_i^{\fh} - \fc_j^{\fh})
		\nonumber\\&
		= \big( \fSigma^{-1/2} (\fZ + \fc_i^{\fh} - \fc_j^{\fh}) \big)^T \big( \fSigma^{-1/2} (\fZ + \fc_i^{\fh} - \fc_j^{\fh}) \big)
		\nonumber\\&
		= \| \fSigma^{-1/2} (\fZ + \fc_i^{\fh} - \fc_j^{\fh}) \|^2 = \| \fZ_{\rm w} + \fd_{i,j} \|^2 .
		% \big\| \fY(i) - \fc_j^{\fh} \big\|^2 = \big\| \fY(i) - \fc_i^{\fh} + \fc_i^{\fh} - \fc_j^{\fh} \big\|^2 = \big\| \fZ + \fc_i^{\fh} - \fc_j^{\fh} \big\|^2.
	\end{align}
	% ---
	Next, in order to bound the event $\mathbbmss{E}_2$ we employ $\| \fA \fx \|^2 = (\fA \fx)^T (\fA \fx)$ where $\fA$ is a matrix and $\fx$ is a vector, and decompose the square norm given in the event $\mathbbmss{E}_2$ as follows
	% ---
	\begin{align}
		\big\| \fZ_{\rm w} + \fd_{i,j} \big\|^2 = ( \fZ_{\rm w} + \fd_{i,j} )^T( \fZ_{\rm w} + \fd_{i,j} ) & = \fZ_{\rm w}^T \fZ_{\rm w} + \fd_{i,j}^T \fd_{i,j} + 2 \fZ_{\rm w}^T \fd_{i,j}
		\nonumber\\&
		= \big\| \fZ_{\rm w} \big\|^2 + \big\| \fd_{i,j}  \big\|^2 + 2 \fZ_{\rm w}^T \fSigma^{-1/2} ( \fc_i^{\fh} - \fc_j^{\fh} ) .
		%\nonumber\\&
		%= \bar{n}^{-1} \big( \big\| \fZ_{\rm w} \big\|^2 + \big\| \fd_{i,j}  \big\|^2 \big) + 2\bar{n}^{-1} \fZ^T \fSigma^{-1} ( \fc_i^{\fh} - \fc_j^{\fh} )
		%	\stackrel{(a)}{=} \bar{n}^{-1} \big( \big\| \fSigma^{-1/2} \fZ \big\|^2 + \big\| \fSigma^{-1/2} ( \fc_i^{\fh} - \fc_j^{\fh} ) \big\|^2 \big) + 2\bar{n}^{-1} \sum_{t=1}^{\bar{n}} \sum_{t'=1}^{\bar{n}} Z_{t} \nu_{t,t'} ( c_{i,t}^{\fh} - c_{j,t}^{\fh} )
		\label{Eq.TypeII-3}
	\end{align}
	% ---
	%where $(a)$ expands the summation in the cross-product tern over symbols.
	Next, we establish the variance for the cross-product in \eqref{Eq.TypeII-3} as follow
	% ---
	\begin{align}
		\label{Eq.Var_Cross_Prod}
		\text{Var} \big[ \fZ_{\rm w}^T \fSigma^{-1/2} ( \fc_i^{\fh} - \fc_j^{\fh} ) \big] & \stackrel{(a)}{=} \text{Var} \big[ (\fSigma^{-1/2} \fZ )^T \fSigma^{-1/2} ( \fc_i^{\fh} - \fc_j^{\fh} ) \big]
		\nonumber\\&
		\stackrel{(b)}{=} \text{Var} \big[ \ff_{i,j}^T \fZ \big]
		\nonumber\\&
		\stackrel{(c)}{=} \mathbb{E}\big[ \big( \ff_{i,j}^T \fZ - \ff_{i,j}^T \mathbb{E}[  \fZ] \big)^2 \big] = \mathbb{E}\big[ \big( \ff_{i,j}^T \big( \fZ - \mathbb{E}[  \fZ] \big) \big)^2 \big]
		\nonumber\\&
		\stackrel{(d)}{=} \mathbb{E}\big[ \big( \ff_{i,j}^T ( \fZ - \mathbb{E}[ \fZ] ) \cdot \big( \fZ - \mathbb{E}[ \fZ] \big)^T \ff_{i,j} \big]
		\nonumber\\&
		%\stackrel{(e)}{=}\ff_{i,j}^T \mathbb{E} \big[ ( \fZ - \mathbb{E}[ \fZ] ) \cdot \big( \fZ - \mathbb{E}[ \fZ] \big)^T \big] \ff_{i,j}
		%\nonumber\\&
		\stackrel{(e)}{=} \ff_{i,j}^T \text{Cov}[\fZ] \ff_{i,j}
		= \big( \fSigma^{-1} ( \fc_i^{\fh} - \fc_j^{\fh} ) \big)^T \fSigma  \big( \fSigma^{-1} ( \fc_i^{\fh} - \fc_j^{\fh} ) \big)
		\nonumber\\&
		\stackrel{(f)}{=} ( \fc_i^{\fh} - \fc_j^{\fh} )^T \fSigma^{-1} ( \fc_i^{\fh} - \fc_j^{\fh} ) ,
	\end{align}
	% ---
	where we use the followings
	% ---
	\begin{itemize}
		\item $(a)$ uses $\fZ_{\rm w} = \fSigma^{-1/2} \fZ.$
		\item $(b)$ invokes $\ff_{i,j} \triangleq \fSigma^{-1} ( \fc_i^{\fh} - \fc_j^{\fh} )$ and $\text{Var}[\fZ^T \ff_{i,j}] = \text{Var}[\ff_{i,j}^T \fZ],$ since $\fZ^T \ff_{i,j}$ and $\ff_{i,j}^T \fZ$ are scalars.
		\item $(c)$ uses $\text{Var}[X] = \mathbb{E}[(X - \mathbb{E}[X])^2]$ with setting $X = \ff_{i,j}^T \fZ.$
		\item $(d)$ follows since $X^2 = X \cdot X $ with $X = \ff_{i,j}^T \big( \fZ - \mathbb{E}[  \fZ] \big)$ being a scalar.
		\item $(e)$ holds since $ \ff_{i,j}$ is deterministic and $\text{Cov}[\fZ] = \mathbb{E}[(\fZ-\mathbb{E}[\fX])(\fZ-\mathbb{E}[\fZ])^T] = \fSigma.$
		\item $(f)$ follows by symmetry of the inverse matrix, i.e., $(\fSigma^{-1})^T = \fSigma^{-1}.$
	\end{itemize}
	% ---
	Next, to bound the expression in \eqref{Eq.Var_Cross_Prod}, we employ two helpful lemmas which characterize bounds on the singular values of inverse covariance matrix and the Rayleigh quotient of a matrix, respectively.
	% ---
	
	% ---
	\begin{customlemma}{3}[Upper Bound on The Rayleigh Quotient]
		\label{Lem.RQ}
		Let $\fA \in \mathbb{R}^{n \times n}$ be a symmetric matrix and define the Rayleigh quotient $\forall \fx \in \mathbb{R}^n, \; \fx \neq \f0$ by $R(\fx) = \fx^T \fA \fx / \fx^T \fx.$ Then, it holds that
		% ---
		\begin{align}
			\lambda_{\min}(\fA) \le R(\fx) \le \lambda_{\max}(\fA)
		\end{align}
		% ---
		where $\lambda_{\min}(\fA)$ and $\lambda_{\max}(\fA)$ are the smallest and largest singular values of $\fA.$
	\end{customlemma}
	% ---
	
	% ---
	\begin{proof}
		The proof is provided in Appendix \ref{App.RQ}.
	\end{proof}
	% ---

	\begin{customlemma}{4}[Bounds on the Singular Values of the Inverse Matrix]
		\label{Lem.DB_Sing_Val}
		Let $\fSigma \in \mathbb{R}^{\bar{n} \times \bar{n}}$ be invertible with singular values $\sigma_1(\fSigma) \ge \sigma_2(\fSigma) \ge \cdots \ge \sigma_{\bar{n}}(\fSigma) > 0.$ Then the singular values of $\fSigma^{-1}$ read $\sigma_k(\fA^{-1}) = \sigma_{n-k+1}^{-1}(\fA),\, \forall t \in [\![\bar{n}]\!]$ and in particular 
		% ---
		\begin{align}
			\sigma_1^{-1}(\fSigma) \leq \sigma_t(\fSigma^{-1}) \leq \sigma_{\bar{n}}^{-1}(\fSigma).
		\end{align}
		% ---
	\end{customlemma}
	% ---
	
	% ---
	\begin{proof}
		The proof is provided in Appendix \ref{App.B_Sing_Inv}.
	\end{proof}
	% ---
	% ---
	We now apply the Chebyshev's inequality and exploits Lemma \ref{Lem.RQ} to bound $\Pr (\mathbbmss{E}_0 )$ as follows:
	% ---
	\begin{align}
		\label{Ineq.Event_E0_1}
		\Pr(\mathbbmss{E}_0) \leq \frac{\text{Var} \big[ 2 \bar{n}^{-1} \fZ_{\rm w}^T \fSigma^{-1} ( \fc_i^{\fh} - \fc_j^{\fh} ) \big]}{\delta_n^2} & = \frac{4 \text{Var} \big[\bar{n}^{-1} \fZ_{\rm w}^T \fSigma^{-1} ( \fc_i^{\fh} - \fc_j^{\fh} ) \big]}{\bar{n}^{2}\delta_n^2}
		\nonumber\\
		& \stackrel{(a)}{\le} \frac{4 \sigma_{\rm max}(\fSigma^{-1}) ( \fc_i^{\fh} - \fc_j^{\fh} )^T ( \fc_i^{\fh} - \fc_j^{\fh} ) }{\bar{n}^2\delta_n^2}
		\nonumber\\
		& = \frac{4\sigma_{\rm max}(\fSigma^{-1}) \| \fc_i^{\fh} - \fc_j^{\fh} \|^2}{\bar{n}^2\delta_n^2},
	\end{align}
	% ---
	where $(a)$ employs Lemma \ref{Lem.RQ} with setting $\fA = \fSigma^{-1}$ to upper bound the variance of cross-product term in \eqref{Eq.Var_Cross_Prod} and since for the symmetric positive definite matrix $\fSigma^{-1}$ the singular values and eigenvalues are identical. Now observe that
	% ---
	\begin{align}
		\label{Ineq.Norm_Diff_Squa_UB}
		\big\| \fc_i^{\fh} - \fc_j^{\fh} \big\|^2 & \stackrel{(a)}{\leq} \big( \sqrt{\bar{n}} \big\| \fc_i^{\fh} \big\|_{\infty} + \sqrt{\bar{n}} \big\| \fc_j^{\fh} \big\|_{\infty} \big)^2 = 4\bar{n}K^2L^2P_{\rm max}^2 ,
	\end{align}
	% ---
	where $(a)$ holds by the triangle inequality. Thereby,
	% ---
	\begin{align}
		\label{Ineq.Event_E0_2}
		\Pr(\mathbbmss{E}_0) \stackrel{(a)}{\leq} \frac{4 \| \fc_i^{\fh} - \fc_j^{\fh} \|^2}{\sigma_{\rm min}(\fSigma) \bar{n}^2 \delta_n^2}
		\stackrel{(b)}{\leq} \frac{9a n^{2\kappa}L^2P_{\rm max}^2 (C_{\sigma_{\rm min}}C_{\sigma_{\rm max}} )^{-1} }{n^{-\mu} n^{2\kappa + \mu + b}} = \frac{9a L^2P_{\rm max}^2 (C_{\sigma_{\rm min}}C_{\sigma_{\rm max}} )^{-1} }{n^b} 
		\triangleq \zeta_0,
	\end{align}
	% ---
	where
	% ---
	\begin{itemize}
		\item $(a)$ employs Lemma \ref{Lem.DB_Sing_Val} and the inversion property of singular values, $\sigma_{\min}(\fSigma^{-1})=\sigma_{\max}^{-1}(\fSigma)$.
		\item $(b)$ uses \eqref{Ineq.Norm_Diff_Squa_UB}, $\delta_n = 4aC_{\sigma_{\rm max}} / 3n^{(1 - (2\kappa + \mu + b))/2}$ and $\sigma_{\rm min}(\fSigma) \in \Omega(n^{-\mu})$ with $C_{\sigma_{\rm min}} >0$ and $n \leq \bar{n}.$
	\end{itemize}
	% ---
	 Now, the complementary event $\mathbbmss{E}_0^c,$ gives
	% ---
	\begin{align}
		\label{Eq.E_0_Comp}
		2 \bar{n}^{-1} \fZ_{\rm w}^T \fSigma^{-1} ( \fc_i^{\fh} - \fc_j^{\fh} ) > - \delta_n.
	\end{align}
	% ---
	Next, applying the law of total probability to the event $\mathbbmss{E}_2$ over $\mathbbmss{E}_0$ and its complement $\mathbbmss{E}_0^c,$ gives
	% ---
	\begin{align}
		P_{e,2} (i,j) \leq \Pr(\mathbbmss{E}_2) \stackrel{(a)}{\leq} \Pr(\mathbbmss{E}_0) + \Pr\left( \mathbbmss{E}_2 \cap\,{\mathbbmss{E}_0^c} \right)
		\stackrel{(b)}{\leq} \Pr\left( \mathbbmss{E}_0 \right) + \Pr\left( \mathbbmss{E}_1 \right), \hspace{-1mm}
		\label{Eq.TypeIIError-E_0+E_1} 
	\end{align}
	% ---
	where $(a)$ uses $\mathbbmss{E}_2 \cap \mathbbmss{E}_0 \subset \mathbbmss{E}_0$ and $(b)$ holds by $\Pr(\mathbbmss{E}_2 \cap \mathbbmss{E}_0^c) \leq \Pr (\mathbbmss{E}_1 )$ which is proved in the following:
	% ---
	\begin{align}
		\hspace{-2mm} \Pr(\mathbbmss{E}_2 \cap \mathbbmss{E}_0^c) \hspace{-.4mm} = \hspace{-.4mm} \Pr \big( \big\{ U_{i,j} - 1 \leq \delta_n-V_{i,j} \big\}
		\cap\big\{| V_{i,j} | \leq \delta_n \, \big\}  \big)
		\overset{(a)}{\leq} \Pr \big( \big\{ U_{i,j} - 1 \leq 2 \delta_n  \big\}  \big)
		= \Pr\left(\mathbbmss{E}_1 \right), \hspace{-1mm}
	\end{align}
	% ---
	where $(a)$ uses $\delta_n-V_{i,j}\leq 2\delta_n$ conditioned on $| V_{i,j} | \leq \delta_n.$ We now bound $\Pr\left(\mathbbmss{E}_1 \right).$ Observe that
	% ---
	\begin{align}
		\label{Ineq.Radius_Event}
		\| \fd_{i,j} \|^2 \triangleq \big\| \fSigma^{-1} ( \fc_i^{\fh} - \fc_j^{\fh} ) \big\|^2	\stackrel{(a)}{\geq} \sigma_{\rm max}^{-1}(\fSigma) \big\| \fc_i^{\fh} - \fc_j^{\fh} \big\|^2 \stackrel{(b)}{\geq} 4C_{\sigma_{\rm max}} H_{\rm min}^2\bar{n}\epsilon_n n^{-\mu/2} ,
	\end{align}
	% ---
	where
	% ---
	\begin{itemize}
		\item $(a)$ holds by \cite[Lem. 5]{Salariseddigh_affine_23_arXiv} and since $\sigma_{\rm min}(\fSigma^{-1}) = \sigma_{\rm max}^{-1}(\fSigma);$ see \cite[Ch. 9]{Kunze71}.
		\item $(b)$ employs Lemma \ref{Lem.Min_dis_con_cb} with $\| \fc_i - \fc_j \| \geq 2r_0 = 2\sqrt{\bar{n}\epsilon_n},$ and $\sigma_{\rm max}(\fSigma) \in \mathcal{O}(n^{\mu/2})$ with $C_{\sigma_{\rm max}}>0.$
	\end{itemize}
	% ---
	 Thus, merging \eqref{Eq.TypeII-3} and \eqref{Ineq.Radius_Event}, we can establish the following bound for $\mathbbmss{E}_1:$
	% ---
	\begin{align}
		\label{Ineq.Event_E1}
		\Pr(\mathbbmss{E}_1) \stackrel{(a)}{\leq} \Pr \Big( \sum_{t=1}^{\bar{n}} Z_{\rm w,t}^2 - 1 \leq - \bar{n}\delta_n \Big)
		\stackrel{(b)}{\leq} \frac{\sum_{t=1}^{\bar{n}} \text{Var} [Z_{\rm w,t}^2] }{\bar{n}^2\delta_n^2}
		\leq \frac{2}{\bar{n}\delta_n^2}\stackrel{(c)}{\leq} \frac{9}{8a^2C_{\sigma_{\rm max}}^2 n^{2\kappa + \mu + b}} \triangleq \zeta_1,
	\end{align}
	% ---
	where $(a)$ uses \eqref{Ineq.Radius_Event}, $(b)$ employs the Chebyshev's inequality and $(c)$ follows by similar arguments as provided in \eqref{Ineq.TypeI_Final}. Therefore, employing the upper bounds given in \eqref{Ineq.Event_E0_2}, \eqref{Eq.TypeIIError-E_0+E_1} and \eqref{Ineq.Event_E1} yields
	% ---
	\begin{align}
		P_{e,2}(i,j) \leq \Pr(\mathbbmss{E}_0) + \Pr(\mathbbmss{E}_1) \leq \zeta_0 + \zeta_1 \leq e_2,
	\end{align}
	% ---
	hence, $P_{e,2}(i,j) \leq e_2$ holds for sufficiently large $n$ and arbitrarily small $e_2 > 0.$ We have thus shown that for every $e_1,e_2 > 0$ and sufficiently large $n$, there exists an $(n, M(n,R),\allowbreak K(n,\kappa), \allowbreak e_1, e_2)$-DI code. This completes the achievability proof of Theorem~\ref{Th.Colored-Capacity}.

	\subsection{Upper Bound (Converse Proof)}
	\label{Subsec.Converse}
	For brevity in the derivations of Lemma~\ref{Lem.Converse} and to facilitate the subsequent analysis, we adopt the following notational conventions:
	% ---
	\begin{itemize}[leftmargin=*]
		\item $A_{\rm max} = KLP_{\rm max} = \mathcal{O}(n^{\kappa}).$
		\item $Y_t(i) = c_{i,t}^{\fh} + Z_t,\, \forall t \in [\![\bar{n}]\!]$ denote the channel output at time $t$ \emph{conditioned} that $\fx=\fc_i$ was sent.
		\item $c_{i,t}^{\fh} \triangleq \sum_{k=0}^{K-1} h_k c_{i,t-k} ,\, \forall t \in [\![\bar{n}]\!]$ is the convoluted symbol, i.e., the linear combination of $\fc_i$ and $\fh.$
		\item $(\fy - \fc_{k}^{\fh})_{\rm w} \triangleq \fSigma^{-1/2} ( \fy - \fc_{k}^{\fh} ), \, \forall k \in \{i,j\}.$
		\item $\fd_{i,j} \triangleq \fSigma^{-1/2} ( \fc_i^{\fh} - \fc_j^{\fh} ).$
		\item $\mathbbmss{C}_{\text{\tiny conv}} \triangleq \big\{ \fc_i \in \mathbb{R}^n:\; |c_{i,t}| \leq P_{\rm max} , \forall \, i \in [\![M]\!],\, \forall t \in [\![n]\!] \big\} .$
		\item $\mathbbmss{C}_{\text{\tiny conv}}^{\fh} \triangleq \big\{ \fc_i^{\fh} \in \mathbb{R}^{\bar{n}} :\, c_{i,t}^{\fh} \triangleq \sum_{k=0}^{K-1} h_k c_{i,t-k},\, \fc_i \in \mathbbmss{C}_{\text{\tiny conv}}, \forall \, i \in [\![M]\!],\, \forall t \in [\![\bar{n}]\!] \big\}.$
	\end{itemize}
	% ---
	\begin{customlemma}{5}
		\label{Lem.Converse}
		Suppose that $R$ is an achievable identification rate for $\bG.$ Let $\{(\mathbbmss{C}_{\text{\tiny conv}}^{(n)},\D^{(n)})\}_{n\in\mathbb{N}}$ be a sequence of $(n,\allowbreak M(n,\allowbreak R),\allowbreak K(n,\allowbreak \kappa),\allowbreak e_1^{(n)},\allowbreak e_2^{(n)})$-DI codes, where $K(n,\kappa)=n^{\kappa}$ for some $\kappa \in [0,1)$, and the error probabilities $e_1^{(n)}$ and $e_2^{(n)}$ both vanish as $n \to \infty.$ Then, for sufficiently large $n,$ the convoluted codebook $\mathbbmss{C}_{\text{\tiny conv}}^{\fh}$ satisfies the following property: any two distinct codewords $\fc_{i_1}^{\fh}$ and $\fc_{i_2}^{\fh}$ in $\mathbbmss{C}_{\text{\tiny conv}}^{\fh}$, with $i_1,i_2 \in [\![M]\!]$ and $i_1 \neq i_2$, are separated by a distance of at least
		% ---		
		\begin{align}
			\label{Ineq.Conv_Distance}
			\big\| \fc_{i_1}^{\fh} - \fc_{i_2}^{\fh} \big\| \geq \sqrt{\bar{n}\epsilon_n'} \triangleq \alpha_n,
		\end{align}
		% ---
		where $\epsilon_n' = a/\bar{n}^{2(1+ (\mu/2) + b)}$ with $b>0$ being an arbitrarily small constant.
	\end{customlemma}
	% ---
	\begin{proof}
		The proof is provided in Appendix~\ref{App.Converse_Proof}.
	\end{proof}
	% ---
	
	We next apply Lemma~\ref{Lem.Converse} to derive an upper bound on the identification capacity. Since the minimum distance of the convoluted codebook is $\alpha_n$, one can place non-overlapping spheres $\mathcal{S}_{\mathbf{c}_i^{\mathrm{h}}}(n,\alpha_n)$ centered at points in $\mathbbmss{C}_{\text{\tiny conv}}^{\mathrm{h}}$. These spheres are generally inscribed within the hypercube $\mathbbmss{Q}_{\mathbf{0}}(\bar{n},A_{\rm max} + 2 r_0)$. Following the reasoning in \cite{Salari26}, such a packing is typically not saturated; nevertheless, using the same approach, the number of codewords, $M,$ is bounded by
	% ---
	\begin{align}
		\label{Ineq.Codebook_Size_UB}
		M = \frac{\text{Vol}\left(\bigcup_{i=1}^{M} \S_{\fc_i}^{\fh}(\bar{n},r_0) \right)}{\text{Vol}(\S_{\fc_1}^{\fh}(\bar{n},r_0))} \stackrel{(a)}{\leq} \frac{\Updelta_n(\mathscr{S}) \cdot \text{Vol}\big( \mathbbmss{Q}_{\f0}(\bar{n},A_{\rm max} + 2r_0) \big)}{\text{Vol}(\S_{\fc_1}^{\fh}(\bar{n},r_0))}
		\stackrel{(c)}{\leq} 2^{-0.599\bar{n}} \cdot \frac{(A_{\rm max}+2r_0)^{\bar{n}}}{\text{Vol}(\S_{\fc_1}^{\fh}(\bar{n},r_0))},
	\end{align}
	% ---
	where $(a)$ holds since a saturated packing encompasses the maximum possible number of sphere, $(b)$ conforms the density definition and $(c)$ exploits  \eqref{Ineq.Density} and the following: $$\mathbbmss{C}_{\text{\tiny conv}}^{\fh} \subseteq \mathbbmss{Q}_{\f0}(\bar{n},A_{\rm max} + 2r_0) = \big\{ \fc_i^{\fh} \in \mathbb{R}^{\bar{n}} \hspace{-.4mm}:\hspace{-.2mm} - (A_{\rm max} + r_0) \leq c_{i,t}^{\fh} \leq A_{\rm max} + r_0, \, \forall \, i \in [\![M]\!], \, \forall \, t \in [\![\bar{n}]\!] \big\},$$ which implies $\text{Vol}( \mathbbmss{C}_{\text{\tiny conv}}^{\fh} ) \leq \text{Vol}( \mathbbmss{Q}_{\f0}(\bar{n},A_{\rm max} + 2r_0)) = (A_{\rm max} + 2r_0)^{\bar{n}}.$ Thereby,
	% ---
	\begin{align}
		\label{Ineq.Log_M_UB}
		\log M \leq \bar{n} \log (A_{\rm max} + 2r_0) - \bar{n} \log r_0 - \bar{n} \log \sqrt{\pi} + \frac{1}{2} \bar{n} \log \bar{n} + \mathcal{O}(\bar{n}).		 
	\end{align}
	% ---
	Now, for $r_0 = \sqrt{\bar{n}\epsilon_n'} = \sqrt{a}/\bar{n}^{\frac{1+\mu+2b}{2}}$ and $A_{\rm max} = KLP_{\rm max} = \mathcal{O}(n^{\kappa}),$ we obtain
	% ---
	\begin{align}
		\label{Ineq.Log_M_UB2}
		\log M \leq n \log n^{\kappa}LP_{\rm max} + K \log n^{\kappa}LP_{\rm max} + \Big( \frac{2+\mu+2b}{2} \Big) \bar{n} \log \bar{n} + \mathcal{O}(\bar{n}),
	\end{align}
	% --- \bar{n} \log \Big(1 + \frac{2a}{LP_{\rm max}n^{\kappa}\bar{n}^{\frac{1+\mu+2b}{2}}} \Big)
	where the dominant term scales as $\bar{n} \log \bar{n}$. Noting that $\bar{n} \log \bar{n} \sim n \log n$, we choose $M = 2^{(n \log n) R}$, resulting in
	% ---
	\begin{align}
		\label{Ineq.Rate_UB}
		R \leq \frac{1}{n \log n} \Big[ \Big( \frac{2\kappa + 2 + \mu + 2b}{2} \Big) \, n \log n + \kappa n \log P_{\rm max}  + o(n\log n) \Big] ,
	\end{align}
	% ---
	which tends to $1 + \kappa + (\mu/2) + b$ as $n \to \infty$ and $b \to 0.$ Now, since $b>0$ is arbitrarily small, an achievable rate must satisfy $R \leq 1 + \kappa + (\mu/2)$ This completes the proof of Theorem~\ref{Th.Colored-Capacity}.
	
	\section{Interpretation of the Capacity Bounds}
	\label{Sec.Interpretations}
	Theorem \ref{Th.Colored-Capacity} reveals that the colored Gaussian channel $\bG$ is parameterized by $(\kappa,\mu),$ that is, we have a mapping: $(\kappa,\mu) \mapsto \bG(\kappa,\mu).$ This observation implies that the channel is not a single fixed object; instead, it belongs to a family of channels indexed by the values of $\kappa$ and $\mu.$ Let define the admissible parameter region by 
	\[
	\mathcal{P}	= \left\{
	(\kappa,\mu)\in\mathbb{R}^2 :
	0 \le \kappa < \frac12,\;
	0 \le \mu < \frac12 ,\;
	0 \leq \kappa+\mu < \frac{1}{2}
	\right\}.
	\]
	Then, Theorem \ref{Th.Colored-Capacity} holds for the family of channels $\{ \bG(\kappa,\mu): (\kappa,\mu) \in \P \}$ and we can define a function:
	\[
	\mathbb{C}_{\rm I}(\bG;\kappa,\mu) := \mathbb{C}_{\rm I}(\bG(\kappa,\mu)).
	\]
	That is, since the channel $\bG$ is parameterized by $(\kappa,\mu)$, the identification capacity induces a scalar-valued function $\mathbb{C}_{\rm I}(\bG;\kappa,\mu).$ The bounds given in Theorem \ref{Th.Colored-Capacity} therefore describe pointwise constraints on this function over the parameter space $\P$ and establishe only a pointwise characterization of the identification capacity $\mathbb{C}_{\rm I}(\bG;\kappa,\mu).$ Specifically, it implies that
	\[
	\mathbb{C}_{\rm I}(\bG;\kappa,\mu) \in [L(\kappa,\mu),U(\kappa,\mu)],
	\]
	where the achievable and converse bounds are given by
	% ---
	\begin{align}
		L(\kappa,\mu) & = (1 - 2(\kappa + \mu))/ 4 \quad \text{ and } \quad U(\kappa,\mu) = 1+\kappa + \mu/2.
	\end{align}
	% ---	 
	Thus, Theorem \ref{Th.Colored-Capacity} does not determine the capacity function $\mathbb{C}_{\rm I}(\bG;\kappa,\mu)$ uniquely. Rather, it specifies a family of admissible functions consistent with the achievable and converse bounds. Defining the set of capacity functions
	\[
	\mathcal{F}_{\mathbb{C}_{\rm I}}
	=
	\left\{
	\mathbb{C}_{\rm I}:\mathcal{P}\to\mathbb{R}
	:\;
	L(\kappa,\mu)\le \mathbb{C}_{\rm I}(\bG;\kappa,\mu) \le U(\kappa,\mu),
	\ \forall (\kappa,\mu)\in\mathcal{P}
	\right\},
	\]
	the true (unknown) identification-capacity function satisfies
	\[
	\mathbb{C}_{\rm I}(\bG;\kappa,\mu) \in \mathcal{F}_{\mathbb{C}_{\rm I}}.
	\]
	Obesrve that both $L(\kappa,\mu)$ and $U(\kappa,\mu)$ are affine functions of $(\kappa,\mu)$. Moreover, $L(\kappa,\mu)$ is strictly decreasing in both $\kappa$ and $\mu$, whereas $U(\kappa,\mu)$ is strictly increasing in both parameters. The resulting bound gap is
	\[
	\Delta(\kappa,\mu) \triangleq U(\kappa,\mu) - L(\kappa,\mu) = \frac34 + \frac32\kappa + \mu
	\]
	which increases monotonically with $\kappa$ and $\mu$. It is important, however, not to attribute any structural property of the true capacity function $\mathbb{C}_{\rm I}(\bG)$ to the opposite monotonic behavior of the two bounding functions. The inequalities above impose only pointwise constraints on the admissible values of $\mathbb{C}_{\rm I}(\bG)$; they do not constrain/force how the capacity varies across neighboring parameter values. Consequently, the signs of the partial derivatives or directional derivatives of $\mathbb{C}_{\rm I}(\bG)$ are not determined by the bounds. In particular, the available bounds do not permit one to infer monotonicity, convexity, concavity, or any other regularity property (e.g., curvature, continuity, etc) of the identification capacity. The function $\mathbb{C}_{\rm I}(\bG)$ may increase, decrease, remain constant, or even exhibit non-monotonic behavior as a function of $\kappa$ and $\mu$, while remaining fully consistent with the available bounds. Therefore, the opposite monotonic behavior of the lower and upper bounds in $\kappa$ and $\mu$ should not be interpreted as evidence that the true identification capacity necessarily deteriorates or improves as either parameter grows. We note that even arbitrarily tight bounds $\Delta(\kappa,\mu) \to 0$ do not guarantee monotonicity unless the bounding functions themselves are order-preserving in the same direction.
	
	\textbf{Achievability:}  On the achievability side, increasing ISI and correlation make the effective noise geometry more anisotropic. In the absence of correlation, the covariance ellipsoid is closer to spherical, resulting in approximately uniform spherical noise that can be packed efficiently while preserving distinguishability. As the correlation becomes stronger, however, the covariance ellipsoid becomes increasingly elongated along certain directions. A similar effect can arise from ISI, which introduces additional structure and directional dependencies in the noise. These elongated noise neighborhoods increase the risk of overlap between codeword neighborhoods. To guarantee small false-positive and false-negative probabilities for all message pairs, larger separations between codewords are therefore required, reducing packing efficiency and resulting in a weaker lower bound. Consequently, the decrease in the achievability bound should be interpreted as a limitation of the particular coding and decoding construction employed in the proof rather than as evidence of a fundamental loss in capacity.
	
	\textbf{Converse:} On the converse side, the sphere-packing and volume-counting arguments identify a larger admissible geometric space in the presence of ISI and correlation, leading to a larger upper bound. However, without a corresponding achievability result, there is no evidence that coding schemes can actually exploit this enlarged space to reliably distinguish more messages. Thus, the increase should be viewed as a feature of the converse analysis, i.e., a volume-counting effect, rather than a demonstrated operational gain.
	
	\textbf{Interpretations:} The dependence of our upper and lower bounds on the ISI and correlation parameters should be interpreted with caution. The opposite behavior of the achievability and converse bounds is best viewed as an indication that the current analytical proof techniques are not yet sufficient to characterize the true asymptotic behavior of the identification capacity, rather than as evidence of an intrinsic benefit or penalty associated with ISI or correlation. Since the bounds do not match and no explicit capacity-achieving code construction is available, neither trend can be taken as evidence of the true behavior of the identification capacity. Clarifying the actual role of these channel characteristics will require improved achievability schemes, explicit capacity-achieving code construction, tighter converses or a characterization based on more intrinsic spectral properties of the underlying noise.
	
	\subsection{Parameterized Constructions with Prescribed Monotonicity}
	In the following, we present two parameterized capacity constructions that satisfy the derived bounds while exhibiting opposite monotonicity with respect to $\kappa$ and $\mu.$ The first is increasing in both parameters, whereas the second is decreasing. These artificial examples illustrate that the current bounds alone do not determine the qualitative dependence of the true identification capacity on ISI and correlation, since markedly different behaviors can be consistent with the same bounds. Let define the parameterized weight familiy as follows
	\[
	w_{a,b,c}(\kappa,\mu)
	=
	a+b\kappa+c\mu,
	\qquad a,b,c>0,
	\]
	and recall that
	\[
	L(\kappa,\mu)
	=
	\frac{1-2(\kappa+\mu)}{4},
	\qquad
	U(\kappa,\mu)
	=
	1+\kappa+\frac{\mu}{2},
	\]
	with $\Delta(\kappa,\mu)
	=
	U(\kappa,\mu)-L(\kappa,\mu)
	= 3/4 + 3\kappa/2 + \mu.$
	Since
	\[
	\frac{\partial L}{\partial\kappa}
	=
	\frac{\partial L}{\partial\mu}
	=
	-\frac12,
	\qquad
	\frac{\partial U}{\partial\kappa}
	=
	1,
	\qquad
	\frac{\partial U}{\partial\mu}
	=
	\frac12,  \qquad \frac{\partial \Delta}{\partial\kappa}
	=
	\frac32,
	\qquad
	\frac{\partial \Delta}{\partial\mu}
	=
	1,
	\]
	the monotonicity conditions can be derived explicitly.
	
	\textbf{Construction 1 - Increasing in both \texorpdfstring{$\kappa$}{kappa} and \texorpdfstring{$\mu$}{mu}}: Let define
	% ---
	\begin{align}
		\mathbb{C}_{\rm I}^{\uparrow}(\bG;\kappa,\mu)
		&=
		L(\kappa,\mu)
		+
		w_{a,b,c}(\kappa,\mu)\,
		\Delta(\kappa,\mu)
		\nonumber\\
		&=
		\frac{1-2(\kappa+\mu)}{4}
		+
		(a+b\kappa+c\mu)
		\left(
		\frac34+\frac32\kappa+\mu
		\right).
	\end{align}
	Differentiating yields
	\begin{align}
		\frac{\partial \mathbb{C}_{\rm I}^{\uparrow}(\bG;\kappa,\mu)}{\partial\kappa}
		&=
		-\frac12
		+
		b\,\Delta(\kappa,\mu)
		+
		\frac32\bigl(a+b\kappa+c\mu\bigr),
		\\[2mm]
		\frac{\partial \mathbb{C}_{\rm I}^{\uparrow}(\bG;\kappa,\mu)}{\partial\mu}
		&=
		-\frac12
		+
		c\,\Delta(\kappa,\mu)
		+
		\bigl(a+b\kappa+c\mu\bigr).
	\end{align}
	Since all remaining terms are nonnegative and increase with
	\(\kappa\) and \(\mu\), it suffices to check positivity at
	\((0,0)\). Using \(\Delta(0,0)=\frac34\), we obtain
	\[
	\frac{\partial \mathbb{C}_{\rm I}^{\uparrow}(\bG;0,0)}{\partial\kappa}
	=
	-\frac12+\frac34\,b+\frac32\,a \quad \text{ and } \quad \frac{\partial \mathbb{C}_{\rm I}^{\uparrow}(\bG;0,0)}{\partial\mu}
	=
	-\frac12+\frac34\,c+a.
	\]
	Therefore, sufficient conditions for monotonic increase are
	\[
	-\frac12+\frac34\,b+\frac32\,a>0,
	\qquad
	-\frac12+\frac34\,c+a>0.
	\]
	A simple admissible parameter region is $$\Big\{ a>\frac12, b>0, c>0. \Big\}$$
	Under these conditions,
	\[
	\frac{\partial \mathbb{C}_{\rm I}^{\uparrow}(\bG;\kappa,\mu)}{\partial\kappa}>0,
	\qquad
	\frac{\partial \mathbb{C}_{\rm I}^{\uparrow}(\bG;\kappa,\mu)}{\partial\mu}>0,
	\qquad
	\forall\,(\kappa,\mu)\in\mathcal P.
	\]
	
	\textbf{Construction 2 - Decreasing in both \texorpdfstring{$\kappa$}{kappa} and \texorpdfstring{$\mu$}{mu}}: Let define
	% ---
	\begin{align}
		\mathbb{C}_{\rm I}^{\downarrow}(\bG;\kappa,\mu)
		&=
		L(\kappa,\mu)
		+
		\bigl(1-w_{a,b,c}(\kappa,\mu)\bigr)
		\Delta(\kappa,\mu) =
		U(\kappa,\mu) - w_{a,b,c}(\kappa,\mu)\,
		\Delta(\kappa,\mu)
		\nonumber\\& 
		=
		1+\kappa+\frac{\mu}{2}
		-
		(a+b\kappa+c\mu)
		\left(
		\frac34+\frac32\kappa+\mu
		\right).
	\end{align}
	Differentiating yields
	\begin{align}
		\frac{\partial \mathbb{C}_{\rm I}^{\downarrow}(\bG;\kappa,\mu)}{\partial\kappa}
		&=
		1
		-
		b\,\Delta(\kappa,\mu)
		-
		\frac32\bigl(a+b\kappa+c\mu\bigr),
		\\[2mm]
		\frac{\partial \mathbb{C}_{\rm I}^{\downarrow}(\bG;\kappa,\mu)}{\partial\mu}
		&=
		\frac12
		-
		c\,\Delta(\kappa,\mu)
		-
		\bigl(a+b\kappa+c\mu\bigr).
	\end{align}
	Since the negative terms increase in magnitude with
	\(\kappa\) and \(\mu\), it suffices to verify negativity at
	\((0,0)\). Using \(\Delta(0,0)=\frac34\),
	\[
	\frac{\partial \mathbb{C}_{\rm I}^{\downarrow}(\bG;0,0)}{\partial\kappa}
	=
	1-\frac34\,b-\frac32\,a, \quad \text{ and } \quad \frac{\partial \mathbb{C}_{\rm I}^{\downarrow}(\bG;0,0)}{\partial\mu}
	=
	\frac12-\frac34\,c-a.
	\]
	Therefore, sufficient conditions for monotonic decrease are
	\[
	1-\frac34\,b-\frac32\,a<0,
	\qquad
	\frac12-\frac34\,c-a<0.
	\]
	A simple admissible parameter region is $$\Big\{ a>\frac23, b>0, c>0. \Big\}$$
	Under these conditions,
	\[
	\frac{\partial \mathbb{C}_{\rm I}^{\downarrow}(\bG;\kappa,\mu)}{\partial\kappa}<0,
	\qquad
	\frac{\partial \mathbb{C}_{\rm I}^{\downarrow}(\bG;\kappa,\mu)}{\partial\mu}<0,
	\qquad
	\forall\,(\kappa,\mu)\in\mathcal P.
	\]
	
In the following, we set the parameters of the weight function to $a = 0.72$, $b = 0.25$, and $c = 0.20$, and illustrate the corresponding capacity functions over the admissible region. To highlight the monotonic behaviour, we present surface plots for the proposed constructions in Figures \ref{Fig.Increasing}, \ref{Fig.Decreasing}, and \ref{Fig.Both}.

	\begin{figure}[H]
		\centering
		\includegraphics[scale=.62]{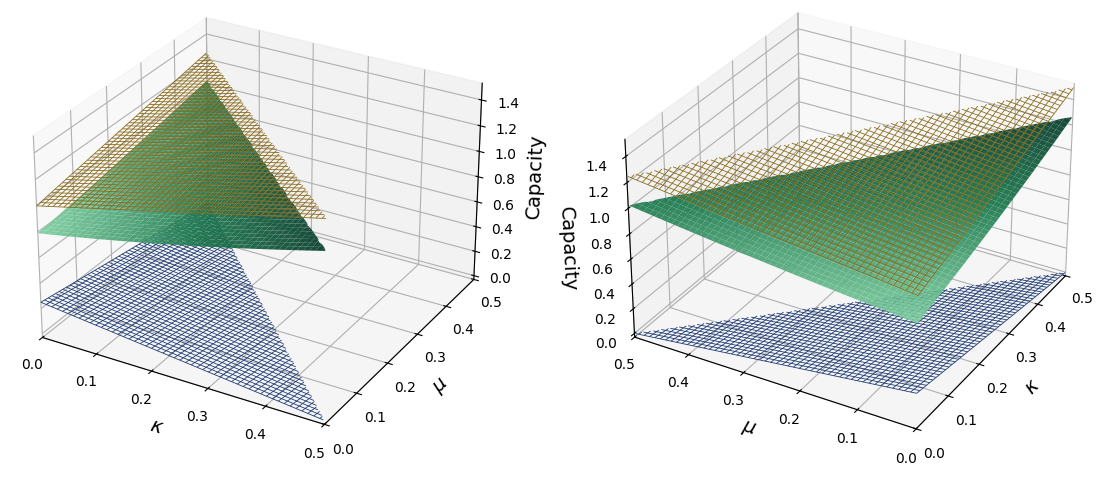}
		\vspace{-12mm}
		\caption{The increasing capacity surface shows that capacity rises monotonically as both parameters $\kappa$ and $\mu$ increase, driven by the weighting function $w = a + b\kappa + c\mu$, where $a=0.72$, $b=0.25$, and $c=0.20$. The capacity is defined as a convex combination of the lower and upper bounds through the form $\mathbb{C}_{\rm I}^{\uparrow}(\bG;\kappa,\mu)=L(\kappa,\mu)+w_{a,b,c}(\kappa,\mu) \Delta(\kappa,\mu).$ This formulation shows how $w$ modulates the transition between $L(\kappa,\mu)$ and $U(\kappa,\mu)$ over the admissible domain.}
		\label{Fig.Increasing}
	\end{figure}
	% ---
	% ---
	\begin{figure}[H]
		\centering
		\includegraphics[scale=0.62]{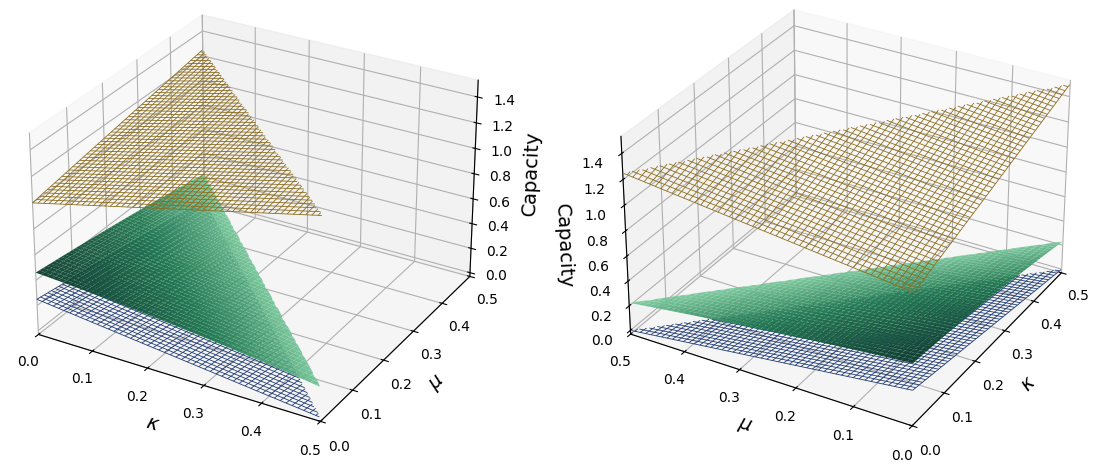}
		\vspace{-12mm}
		\caption{The decreasing capacity surface illustrates a monotonic decline in capacity as $\kappa$ and $\mu$ vary. The weighting function $w = a + b\kappa + c\mu$, where $a=0.72$, $b=0.25$, and $c=0.20$, governs the redistribution between the lower and upper bounds across the admissible domain. The capacity is defined as $\mathbb{C}_{\rm I}^{\downarrow}(\bG;\kappa,\mu)=L(\kappa,\mu)+ \big(1-w_{a,b,c}(\kappa,\mu)\bigr)\Delta(\kappa,\mu).$ This formulation captures how increasing influence of the weighting function alters the balance between $L(\kappa,\mu)$ and $U(\kappa,\mu)$, leading to a systematic reduction in the resulting capacity over the domain.}
		\label{Fig.Decreasing}
	\end{figure}
	% ---
		% ---
	\begin{figure}[H]
		\centering
		\includegraphics[scale=0.325]{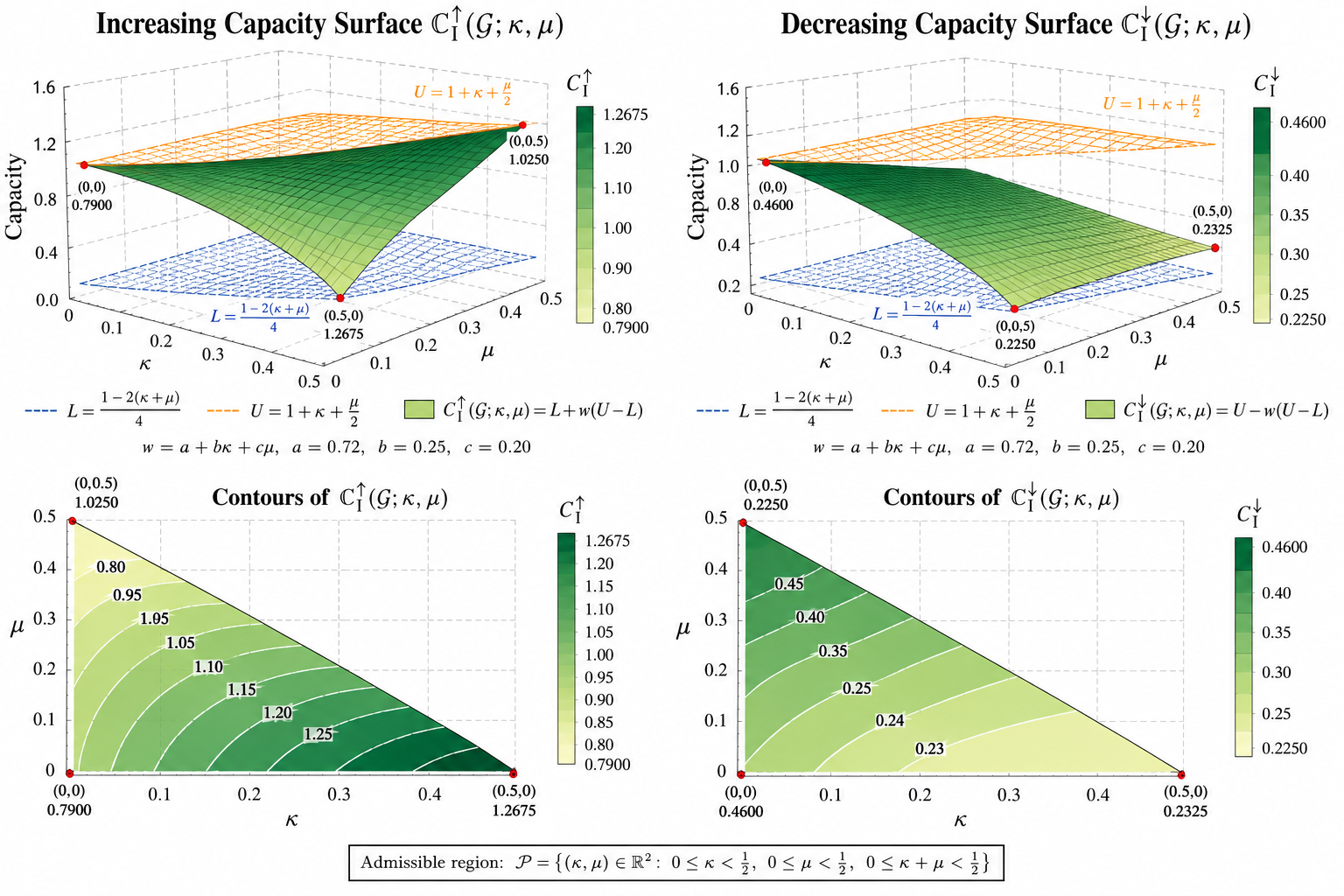}
		\caption{Increasing and decreasing capacity surfaces, together with their contour plots over the admissible region $\P$. The dashed blue and orange wireframes represent the lower and upper bounds $L=(1-2(\kappa+\mu))/4$ and $U=1+\kappa+\mu/2,$ respectively, while the labeled vertices indicate the exact capacity values at the boundary points of $\P.$}
		\label{Fig.Both}
	\end{figure}
	% ---

	% ---
	\section{Conclusions}
	\label{Sec.Conclusion}
	
	This work provides a rigorous treatment of the identification problem over the colored Gaussian channel with ISI, extending the classical memoryless \cite{Salariseddigh_ITW} and white noise model \cite{Salari26} to more realistic wireless settings. We show that reliable identification is achievable with super-exponential codebooks of size $M = 2^{(n \log n)R},$ even when the number of ISI taps grows sub-linearly in $n.$ In addition, we derive explicit lower and upper bounds on the identification rate $R$ as functions of the ISI growth rate $\kappa$ and the singular value growth rate $\mu.$ These results establish fundamental limits for identification over channels with both memory and colored noise, and point to extensions in channels with spectral nulls, multi-user scenarios, finite-blocklength analysis, slow or fast fading settings, exponentially bounded singular value spectrum regimes, and rank-deficient covariance matrices.

	\section{Acknowledgments}
	The authors would like to thank Professor Dr.-Ing. Dr. rer. nat. Holger Boche, Dr. Jonathan Huffmann and Dr. Johannes Rosenberger from Technical University of Munich for helpful discussions and insights concerning colored Gaussian channels.

	\appendices

	\section{Analysis of Whitening Noise Transformation}
	\label{App.WNA}
	
	In the following, we establish that the squared Mahalanobis distance $\fd_{j}^2$ in \eqref{Eq.d_mah} for stochastic $\fY$ follows a chi-squared distribution with $\bar{n}$ degree of freedom, i.e., $\fd_{j}^2 \sim \chi_{\bar{n}}^2.$

	\begin{proof}
		We start by decomposing $\fd_j^2 $ as follows
		\begin{align}
			\label{Eq.Eq.WT}
			\fd_j^2 = (\fY - \fx^{\fh})^T \fSigma^{-1} (\fY - \fx^{\fh}) \stackrel{(a)}{=} \big( \fSigma^{-1/2}(\fY - \fx^{\fh}) \big)^T \big( \fSigma^{-1/2}(\fY - \fx^{\fh}) \big)
			%\nonumber\\&
			\triangleq \fZ_{\rm w}^T\fZ_{\rm w} = \| \fZ_{\rm w} \|^2,
		\end{align}
		% ---
		where $(a)$ holds since $\fSigma$ is symmetric. Observe that $\fZ_{\rm w} \triangleq \fSigma^{-1/2} (\fY - \fx^{\fh})$ in \eqref{Eq.Eq.WT} is a whitening transformation that generates standard Gaussian vector which is proved in the following: First, linearity of the expectation gives $\mathbb{E}[\fZ_{\rm w}] = \fSigma^{-1/2} \mathbb{E}[ \fY - \fx^{\fh} ] = 0.$ Second, note that
		% ---
		\begin{align}
			\text{Cov}[\fSigma^{-1/2} (\fY - \fx^{\fh})] \stackrel{(a)}{=} \fSigma^{-1/2} \fSigma (\fSigma^{-1/2})^T = \fSigma^{-1/2} \fSigma \fSigma^{-1/2} = (\fSigma^{-1/2} \fSigma^{1/2}) \cdot ( \fSigma^{-1/2}\fSigma^{1/2}) = \fI,
		\end{align}
		% ---
		where $(a)$ holds since $\text{Cov}[\fA_{m \times n} \fX_{n \times n}] = \fA \text{Cov}[\fX] \fA^T;$ see \cite{Papoulis02}, and $\text{Cov}[\fY - \fx^{\fh}] = \fSigma.$ Thereby, since the expectation of whitened vector $\fZ_{\rm w}$ is zero and the covariance matrix of $\fZ_{\rm w}$ is the identity matrix, we infer that $\fZ_{\rm w}$ is a standard Gaussian vector, i.e., $Z_{\rm w,t} \overset{\text{\tiny i.i.d}}{\sim} \N(0,1).$ Now, since $\fd_j^2 = \| \fZ_{\rm w} \| ^2 = \sum_{t=1}^{\bar{n}} Z_{\rm w,t}^2$ we conclude that $\fd_j^2 \sim \chi_{\bar{n}}^2.$
	\end{proof}
	% ---

	\section{Proof of Lemma~\ref{Lem.RQ}}
	\label{App.RQ}
	
	In the following, we use the spectral decomposition theorem \cite{Strang12} (see Appendix \ref{App.Spec_Decom_Th} for details) and standard techniques to show that the Rayleigh quotient is bounded between the smallest and largest eigenvalues of the matrix.
	% ---
	\begin{proof}
		Let $\mathbf A \in \mathbb R^{n \times n}$ be a symmetric matrix, and let its eigenvalues be ordered as
		$$
		\lambda_{\min} \triangleq \lambda_1 \le \lambda_2 \le \cdots \le \lambda_n \triangleq \lambda_{\max}.
		$$
		By the spectral theorem, there exists an orthogonal matrix $\mathbf Q$ (i.e., $\fQ^T \fQ = I$) and a diagonal matrix
		$
		\Lambda = \operatorname{diag}(\lambda_1,\dots,\lambda_n)
		$
		such that
		$
		\mathbf A = \mathbf Q \Lambda \mathbf Q^T.
		$ For any nonzero vector $\mathbf x \in \mathbb R^n$, define
		$
		\mathbf y \triangleq \mathbf Q^T \mathbf x.
		$
		Since $\mathbf Q$ is orthogonal,
		$
		\mathbf y^T \mathbf y = \mathbf x^T \mathbf x > 0,
		$
		and
		$$
		\mathbf x^T \mathbf A \mathbf x
		= \mathbf x^T \mathbf Q \Lambda \mathbf Q^T \mathbf x
		= \mathbf y^T \Lambda \mathbf y.
		$$
		Therefore, the Rayleigh quotient is given by
		% ---
		\begin{align}
			R(\mathbf x)
			\triangleq \frac{\mathbf x^T \mathbf A \mathbf x}{\mathbf x^T \mathbf x}
			= \frac{\mathbf y^T \Lambda \mathbf y}{\mathbf y^T \mathbf y}
			= \frac{\sum_{t=1}^n \lambda_t y_t^2}{\sum_{t=1}^n y_t^2}.
		\end{align}
		% ---
		Because $\lambda_{\min} \le \lambda_t \le \lambda_{\max}, \forall t \in [\![n]\!],$ multiplying by $y_t^2 \ge 0$ gives $\lambda_{\min} y_t^2 \le \lambda_t y_t^2 \le \lambda_{\max} y_t^2.$
		Summing over all $t \in [\![n]\!]$ yields
		% ---
		\begin{align}
			\lambda_{\min} \sum_{t=1}^n y_t^2
			\le
			\sum_{t=1}^n \lambda_t y_t^2
			\le
			\lambda_{\max} \sum_{t=1}^n y_t^2.
		\end{align}
		% ---
		Since $\sum_{t=1}^n y_t^2 > 0$, dividing through gives
		$$
		\lambda_{\min}
		\le
		\frac{\sum_{t=1}^n \lambda_t y_t^2}{\sum_{t=1}^n y_t^2}
		\le
		\lambda_{\max}.
		$$
		Hence, $\forall \mathbf x \neq \mathbf 0$ we have
		$$
		\lambda_{\min}
		\le
		\frac{\mathbf x^T \mathbf A \mathbf x}{\mathbf x^T \mathbf x}
		\le
		\lambda_{\max} .
		$$
		Thus, the Rayleigh quotient of a symmetric matrix is always bounded between its smallest and largest eigenvalues. Moreover,
		% ---
		\begin{itemize}
			\item $R(\mathbf x) = \lambda_{\max}$ if and only if $\mathbf x$ lies in the eigenspace corresponding to $\lambda_{\max}$,
			\item $R(\mathbf x) = \lambda_{\min}$ if and only if $\mathbf x$ lies in the eigenspace corresponding to $\lambda_{\min}$.
		\end{itemize}
		% ---
	\end{proof}
	% ---
	
	\section{Spectral Decomposition Theorem}
	\label{App.Spec_Decom_Th}
	
	In the following, we introduce the spectral decomposition theorem \cite{Strang12}.
	% ---
	\begin{theorem}
		Let $\mathbf A \in \mathbb R^{n \times n}$ be a real symmetric matrix. Then $\mathbf A$ has $n$ linearly independent eigenvectors $\mathbf q_1,\dots,\mathbf q_n \in \mathbb R^n$ that can be chosen to be orthonormal. Let $\mathbf Q \triangleq [\mathbf q_1 \ \mathbf q_2 \ \ldots \ \mathbf q_n]$
		be the matrix whose columns are these eigenvectors, and let $\Lambda \triangleq \operatorname{diag}(\lambda_1,\dots,\lambda_n),$
		where $\lambda_t$ is the eigenvalue corresponding to $\mathbf q_t$. Then $\mathbf Q$ is orthogonal, i.e., $\mathbf Q^{-1} = \mathbf Q^T,$
		and $\mathbf A$ admits the factorization
		% ---
		\begin{align}
			\mathbf A = \mathbf Q \Lambda \mathbf Q^T.
		\end{align}
		% ---
		Moreover, $\mathbf A$ admits the equivalent spectral representation
		% ---
		\begin{align}
			\mathbf A = \sum_{t=1}^n \lambda_t \mathbf q_t \mathbf q_t^T.
		\end{align}
		% --- 
	\end{theorem}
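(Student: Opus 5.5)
The statement at the end is the Spectral Decomposition Theorem for a real symmetric $\fA \in \mathbb{R}^{n\times n}$. I will prove it by induction on $n$, using two facts: every eigenvalue of a real symmetric matrix is real, and the orthogonal complement of an eigenvector is invariant under $\fA$. The outer-product form then follows by expanding the matrix product column by column.

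\textbf{Step 1: real eigenvalues with real eigenvectors.} The characteristic polynomial $\det(\fA-\lambda\fI)$ has a complex root $\lambda$ with eigenvector $\mathbf{v}\in\mathbb{C}^n$. Compute $\bar{\mathbf{v}}^T\fA\mathbf{v}$ in two ways. Applying $\fA\mathbf{v}=\lambda\mathbf{v}$ gives $\lambda\|\mathbf{v}\|^2$. Taking the conjugate transpose and using $\fA=\fA^T$ real gives $\bar\lambda\|\mathbf{v}\|^2$. Hence $\lambda\in\mathbb{R}$. Since $\fA-\lambda\fI$ is then a real singular matrix, it has a real null vector, so we may choose a real unit eigenvector $\mathbf{q}_1$.

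\textbf{Step 2: induction on $n$.} The case $n=1$ is trivial. For the inductive step, let $W=\mathbf{q}_1^\perp$. For $\mathbf{w}\in W$, symmetry gives $\mathbf{q}_1^T\fA\mathbf{w}=(\fA\mathbf{q}_1)^T\mathbf{w}=\lambda_1\mathbf{q}_1^T\mathbf{w}=0$, so $\fA W\subseteq W$.

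Extend $\mathbf{q}_1$ to an orthonormal basis, for example by Gram--Schmidt, and collect it into an orthogonal $\fQ_1=[\mathbf{q}_1\ \fB]$. Then $\fQ_1^T\fA\fQ_1=\operatorname{diag}(\lambda_1,\fA')$, where $\fA'=\fB^T\fA\fB$ is $(n-1)\times(n-1)$ and symmetric; the off-diagonal blocks vanish by the invariance of $W$.

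By the induction hypothesis, $\fA'=\fQ'\Lambda'\fQ'^T$ with $\fQ'$ orthogonal. Set $\fQ=\fQ_1\operatorname{diag}(1,\fQ')$. This is a product of orthogonal matrices, hence orthogonal, and it satisfies $\fQ^T\fA\fQ=\Lambda$.

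The columns $\mathbf{q}_t$ of $\fQ$ are orthonormal, hence linearly independent. Multiplying $\fQ^T\fA\fQ=\Lambda$ on the left by $\fQ$ gives $\fA\fQ=\fQ\Lambda$, so $\fA\mathbf{q}_t=\lambda_t\mathbf{q}_t$ for each $t$. Orthogonality $\fQ^T\fQ=\fI$ gives $\fQ^{-1}=\fQ^T$, and therefore $\fA=\fQ\Lambda\fQ^T$.

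\textbf{Step 3: outer-product form.} Writing $\fQ\Lambda=[\lambda_1\mathbf{q}_1\ \cdots\ \lambda_n\mathbf{q}_n]$ and multiplying by $\fQ^T$ row-block-wise gives
\[
\fQ\Lambda\fQ^T=\sum_{t=1}^n\lambda_t\mathbf{q}_t\mathbf{q}_t^T.
\]

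\textbf{Main obstacle.} The only delicate point is Step 1, namely guaranteeing that a real eigenvector exists. That requires passing through $\mathbb{C}$ and returning to $\mathbb{R}$, which is why the argument insists on the real null space of the real matrix $\fA-\lambda\fI$. Everything after that is bookkeeping with block matrices.
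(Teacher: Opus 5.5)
Your proof is correct, but it takes a genuinely different route from the paper.

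\textbf{What the paper does.} The paper's argument takes as given that a symmetric $\mathbf A$ has $n$ linearly independent eigenvectors. It stacks them into $\mathbf Q$, writes $\mathbf A\mathbf Q=\mathbf Q\Lambda$, and inverts $\mathbf Q$ to get $\mathbf A=\mathbf Q\Lambda\mathbf Q^{-1}$. It then asserts that, by symmetry, the eigenvectors may be chosen orthonormal, so that $\mathbf Q^{-1}=\mathbf Q^T$. The outer-product form is stated but not derived. In other words, the paper only does the bookkeeping that turns an orthonormal eigenbasis into the factorization. That corresponds to your last paragraph of Step~2 and your Step~3, and the substantive content is deferred to the textbook reference.

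\textbf{What your argument adds.} Your Step~1 shows the eigenvalues are real via $\bar{\mathbf v}^T\mathbf A\mathbf v$, and obtains a real eigenvector as a null vector of the real singular matrix $\mathbf A-\lambda\mathbf I$. Your deflation argument then uses the invariance of $\mathbf q_1^\perp$, the block form $\mathbf Q_1^T\mathbf A\mathbf Q_1=\operatorname{diag}(\lambda_1,\mathbf A')$, and induction. Together these actually prove the existence the paper assumes, and they handle repeated eigenvalues automatically. By contrast, the paper's claim that the eigenvectors ``can be chosen orthonormal'' silently needs two further facts: orthogonality of eigenvectors for distinct eigenvalues, and Gram--Schmidt inside each eigenspace. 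The paper's version buys brevity at the price of assuming the theorem's core; yours is self-contained at the price of the complex detour and the induction.

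\textbf{A small precision.} The two off-diagonal blocks vanish for different reasons. $\mathbf B^T\mathbf A\mathbf q_1$ vanishes because $\mathbf A\mathbf q_1=\lambda_1\mathbf q_1$ is orthogonal to the columns of $\mathbf B$. Only $\mathbf q_1^T\mathbf A\mathbf B$ uses the invariance of $W$. Either block also vanishes once the other does, by symmetry of $\mathbf Q_1^T\mathbf A\mathbf Q_1$.
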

	% ---
	
	% ---
	\begin{proof}
		Since $\mathbf A$ is symmetric, it has $n$ linearly independent eigenvectors $\mathbf q_1, \mathbf q_2, \dots, \mathbf q_n$
		with corresponding eigenvalues $\lambda_1, \lambda_2, \dots, \lambda_n.$ This means that for each $t$, we have
		$$\mathbf A \mathbf q_t = \lambda_t \mathbf q_t.$$
		Next, we build a matrix from eigenvectors $\mathbf Q \triangleq [\mathbf q_1 \ \mathbf q_2 \ \cdots \ \mathbf q_n].$ We also build a diagonal matrix of eigenvalues $\Lambda \triangleq \text{diag}(\lambda_1, \lambda_2, \dots, \lambda_n).$ Observe that, the equation $\mathbf A \mathbf q_t = \lambda_t \mathbf q_t$ for all $t$ can be written at once as:
		\[
		\mathbf A \mathbf Q = \mathbf Q \Lambda.
		\]
		This works because multiplying $\mathbf A$ by each column of $\mathbf Q$ produces the same effect as multiplying by $\Lambda$. Since the eigenvectors are linearly independent, $\mathbf Q$ is invertible. So we multiply both sides on the right by $\mathbf Q^{-1}$ and obtain $\mathbf A \mathbf Q \mathbf Q^{-1} = \mathbf Q \Lambda \mathbf Q^{-1}.$		
		This simplifies to:
		\[
		\mathbf A = \mathbf Q \Lambda \mathbf Q^{-1}.
		\]
		Because $\mathbf A$ is symmetric, we can choose the eigenvectors to be orthonormal. This makes $\mathbf Q$ an orthogonal matrix, therefore $\mathbf Q^{-1} = \mathbf Q^T.$ Therefore,
		\[
		\mathbf A = \mathbf Q \Lambda \mathbf Q^T.
		\]
		\textbf{Interpretation:} A real symmetric matrix acts as a rotation into an eigenbasis, followed by scaling along orthogonal directions, followed by a rotation back. That is, $\mathbf A$ acts as rotation into eigenvector directions ($\mathbf Q^T$), scaling by eigenvalues ($\Lambda$), and rotating back by ($\mathbf Q$).
	\end{proof}

\section{Proof of Lemma~\ref{Lem.DB_Sing_Val}}
\label{App.B_Sing_Inv}

In the following, we employ the singular value decomposition (SVD) to derive the singular values of the inverse matrix depending to the original singular values.

\begin{proof}
	
	Let the SVD of $\fSigma$ be given by $\fSigma = \fU \flambda \fV^T$ where $\fU, \fV \in \mathbb{R}^{\bar{n} \times \bar{n}}$ are unitary matrices and $\flambda$ is a diagonal matrix consisting of all singular values, i.e., $\flambda = \text{diag}(\sigma_1(\fSigma), \dots, \sigma_{\bar{n}}(\fSigma)),$
	with $\sigma_1(\fSigma) \ge \cdots \ge \sigma_{\bar{n}}(\fSigma) > 0$. Next, we invert the decomposition. Since $\fSigma$ is invertible, all singular values are positive, thus,
	$\fSigma^{-1} = \text{diag}\!\left( \sigma_1^{-1}(\fSigma), \dots, \sigma_{\bar{n}}^{-1}(\fSigma) \right).$
	Using $(\fA\fB\fC)^{-1} = \fC^{-1}\fB^{-1}\fA^{-1}$ for matrices $\fA,\fB,\fC,$ we obtain $$\fA^{-1} = \fV \flambda^{-1} \fU^T,$$ which is a SVD of $\fSigma^{-1}$ since $\fV$ and $\fU^T$ are unitary and $\fSigma^{-1}$ is diagonal with positive entries. Therefore, $\fA^{-1} = \fV \fSigma^{-1} \fU^T$ is a SVD of $\fA^{-1}$, up to ordering. Now, we determine the singular values. Observe that the diagonal entries of $\fSigma^{-1}$ are $\sigma_1^{-1}(\fSigma), \dots, \sigma_{\bar{n}}^{-1}(\fSigma).$ Since $\sigma_1(\fSigma) \ge \cdots \ge \sigma_{\bar{n}}(\fSigma)$, we obtain $$\sigma_1^{-1}(\fSigma)\le \cdots \le \sigma_{\bar{n}}^{-1}(\fSigma).$$ Thus they are in increasing order. Next, arranging in decreasing order gives $$\sigma_1(\fSigma^{-1}) = \sigma_{\bar{n}}^{-1}(\fSigma) \ , \ \sigma_2(\fSigma^{-1}) = \sigma_{\bar{n}-1}^{-1}(\fSigma) \ , \, \dots \ , \ \sigma_{\bar{n}}(\fSigma^{-1}) = \sigma_1^{-1}(\fSigma).$$
	Thus, $\sigma_t(\fSigma^{-1}) = \sigma_{\bar{n}-t+1}^{-1}(\fSigma)\,, \forall t \in [\![\bar{n}]\!].$ Now, since $\sigma_1(\fSigma) \ge \cdots \ge \sigma_{\bar{n}}(\fSigma),$ the following bounds on the singular values of the inverse covariance matrix are obtained
	$$\sigma_1^{-1}(\fSigma) \le \sigma_t(\fSigma^{-1}) \le \sigma_{\bar{n}}^{-1}(\fSigma).$$
\end{proof}
% ---

	\section{Proof of Lemma~\ref{Lem.Converse}}
	\label{App.Converse_Proof}
	
	We establish Lemma~\ref{Lem.Converse} via a proof by contradiction. To this end, suppose that the condition in \eqref{Ineq.Conv_Distance} is violated, and show that this assumption leads to a contradiction. In particular, we prove that the sum of the type I and type II error probabilities converge to one, i.e., $\lim_{n \to \infty} \big[ P_{e,1}(i_1) + P_{e,2}(i_2,i_1) \big] = 1.$
	
	% ---
	\begin{proof}
	Fix $e_1$ and $e_2$. Let $\tau,\theta,\zeta>0$ be arbitrarily small. Assume to the contrary that 
	there exist two messages $i_1$ and $i_2$, where $i_1\neq i_2$, such that
	% ---
	\begin{align}
		\label{Eq.Alpha_nFast}
		\big\| \fc_{i_1}^{\fh} - \fc_{i_2}^{\fh} \big\| < \sqrt{\bar{n}\epsilon_n'} \triangleq \alpha_n = \sqrt{a}/\bar{n}^{\frac{1+ \mu + 2b}{2}} .
	\end{align}
	% ---
	Now let us define two subsets as follows
	% ---
	\begin{align}
		\label{Eq.Event_BC}
		\mathbbmss{D}_{i_1,i_2} & \triangleq \Big\{ \fy \in \mathbbmss{D}_{i_1}: \| (\fy - \fc_{i_2}^{\fh})_{\rm w} \| \leq \sqrt{\bar{n}( 1 + \zeta)} \Big\},
		\nonumber\\
		\mathbbmss{E}_{i_2} & \triangleq \Big\{ \fy \in \mathbb{R}^{\bar{n}}:
		\| (\fy - \fc_{i_1}^{\fh})_{\rm w} \| \leq \sqrt{\bar{n}(1 + \zeta)} \Big\} .
	\end{align}
	% ---
	Next, we can bound the type I error probability according to the events designed in \eqref{Eq.Event_BC} as follows
	% ---
	\begin{align}
		1-P_{e,1}(i_1) = \int_{\mathbbmss{D}_{i_1}} \hspace{-2mm} f_{\fZ}(\fy - \fc_{i_1}^{\fh}) d\fy & = \int_{\mathbbmss{D}_{i_1,i_2}} f_{\fZ}(\fy - \fc_{i_1}^{\fh}) d\fy + \int_{\mathbbmss{D}_{i_1} \setminus \mathbbmss{D}_{i_1,i_2}} f_{\fZ}(\fy - \fc_{i_1}^{\fh}) d\fy 
		\nonumber\\&
		\leq \int_{\mathbbmss{D}_{i_1,i_2}} f_{\fZ}(\fy - \fc_{i_1}^{\fh}) d\fy + \int_{\mathbbmss{E}_{i_2}^c} f_{\fZ}(\fy - \fc_{i_1}^{\fh}) d\fy ,
		\label{Eq.Pe1boundConv0Fast}
	\end{align}
	% ---
	where the last inequality holds since $\mathbbmss{D}_{i_1} \setminus \mathbbmss{D}_{i_1,i_2} \subset \mathbbmss{E}_{i_2}^c.$
	% ---
	Consider the second integral, for which the domain is $\mathbbmss{E}_{i_2}^c$. Then, by the triangle inequality
	% ---
	\begin{align}
		\| (\fy - \fc_{i_1}^{\fh})_{\rm w} \| & \geq \| (\fy - \fc_{i_2}^{\fh})_{\rm w} \| - \| \fd_{i,j} \|
		\nonumber\\&
		> \sqrt{\bar{n}( 1 + \zeta)} - \sigma_{\rm max}(\fSigma^{-1/2}) \| \fc_{i_1}^{\fh} - \fc_{i_2}^{\fh} \| \geq \sqrt{\bar{n}(1 + \zeta)} - \sigma_{\rm max}(\fSigma^{-1/2}) \alpha_n .
	\end{align}
	% ---
	The above inequality for $\eta<\frac{\zeta}{2}$ and sufficiently large $n,$ implies the following subset
	% ---
	\begin{align}
		\mathbbmss{F}_{i_1,i_2}^c = \Big\{\fy \in \mathbb{R}^{\bar{n}} \; : \, \| ( \fy - \fc_{i_1}^{\fh} )_{\rm w} \| > \sqrt{\bar{n}( 1 + \eta)} \Big\},
		\label{Eq.Regiong0}
	\end{align}
	% ---
	That is,
	% ---
	\begin{align}
		\Big\{\fy \in \mathbb{R}^{\bar{n}} \; : \, \| (\fy - \fc_{i_2}^{\fh})_{\rm w} ) \|  \geq
		\sqrt{\bar{n}( 1 + \zeta )} \Big\}   \overset{\text{implies}}{\longrightarrow}   \Big\{\fy \in \mathbb{R}^{\bar{n}} \; : \, \| (\fy - \fc_{i_1}^{\fh})_{\rm w} ) \|  \geq
		\sqrt{\bar{n}( 1 + \eta )} \Big\} .
	\end{align}
	% ---
	Thereby, we conclude that $\mathbbmss{F}_{i_1,i_2}^c \supset \mathbbmss{E}_{i_2}^c.$ Hence, the second integral in \eqref{Eq.Pe1boundConv0Fast} is bounded by
	% ---
	\begin{align}
		\label{Ineq.F_i_1_i_2_compl}
		\int_{\mathbbmss{F}_{i_1,i_2}^c} \hspace{-6mm} f_{\fZ}( \fy - \fc_{i_1}^{\fh}) d\fy = \Pr \Big( \| \fSigma^{-1/2} ( \fY(i_1) - \fc_{i_1}^{\fh} ) \| \hspace{-.7mm} > \hspace{-.7mm} \sqrt{\bar{n}(1 + \eta)} \Big)
		\hspace{-.7mm} \stackrel{(a)}{=} \hspace{-.7mm} \Pr \big( \bar{n}^{-1} \big\| \fZ_{\rm w} \big\|^2 - 1 > \eta \big) \stackrel{(b)}{\leq} \frac{2}{n\eta^2} \leq \tau,
	\end{align}
	% ---
	for sufficiently large $n,$ where $(a)$ follows by the substitution of $\fZ_{\rm w} \equiv \fSigma^{-1/2} ( \fY(i_1) - \fc_{i_1}^{\fh} )$ and $(b)$ holds by the Chebyshev's inequality and exploiting $n \leq \bar{n}$ and the following:
	% ---
	\begin{align} 
		\text{Var}[\bar{n}^{-1} \| \fZ_{\rm w} \|^2 - 1] = \bar{n}^{-2} \text{Var}[\| \fZ_{\rm w} \|^2 ] \stackrel{(a)}{=} \bar{n}^{-2} \sum_{t=1}^{\bar{n}} \text{Var} [Z_{\rm w,t}^2] \stackrel{(b)}{=} \bar{n}^{-2} \big( \sum_{t=1}^{\bar{n}} 3\sigma_{Z_{\rm w,t}}^4 - \sigma_{Z_{\rm w,t}}^2 \big) = 2\bar{n}^{-1},
	\end{align}
	% ---
	where $(a)$ invokes $Z_{\rm w,t} \overset{\text{\tiny i.i.d}}{\sim} \N(0,1)$ and $(b)$ holds since $\text{Var}[Z_{\rm w,t}^2] = \mathbb{E}[Z_{\rm w,t}^4] - (\mathbb{E}[Z_{\rm w,t}^2])^2 $ and $\mathbb{E}[Z_t^4] = 3\sigma_{Z_t}^4$ for $Z_t \overset{\text{\tiny i.i.d}}{\sim} \N(0,\sigma_{Z_t}^2)$ with setting $Z_t = Z_{\rm w,t}.$ Thus, merging \eqref{Eq.Pe1boundConv0Fast} and \eqref{Ineq.F_i_1_i_2_compl} and  gives
	%%%
	\begin{align}
		\label{Eq.ComplTypeIFast}
		1 - \tau - P_{e,1}(i_1) \leq \int_{\mathbbmss{D}_{i_1,i_2}} f_{\fZ}(\fy - \fc_{i_1}^{\fh}) d\fy.
	\end{align}
	%%%
	
	Now, we can focus on the inner integral with domain of $\mathbbmss{D}_{i_1,i_2}$, i.e., when
	% ---
	\begin{align}
		\| (\fy - \fc_{i_2}^{\fh})_{\rm w} \| \leq \sqrt{\bar{n}( 1 + \zeta)}.
		\label{Eq.ui2DistFast}
	\end{align}
	% ---
	Observe that, the absolute value of difference between noise distribution for distinct codewords reads
	% ---
	\begin{align}
		\label{Ineq.Error_Diff}
		\hspace{-2mm} \big| f_{\fZ}(\fy - \fc_{i_1}^{\fh}) - f_{\fZ}(\fy - \fc_{i_2}^{\fh}) \big|	= f_{\fZ}(\fy - \fc_{i_1}^{\fh}) \cdot \Big| 1 - \exp \big( - \big( \| (\fy - \fc_{i_2}^{\fh})_{\rm w} \|^2 - \| (\fy - \fc_{i_1}^{\fh})_{\rm w} \|^2 \big) / 2 \big) \Big| .
	\end{align}
	% ---
	Now, by the triangle inequality, we have $\| (\fy - \fc_{i_1}^{\fh})_{\rm w} \| \leq \| (\fy - \fc_{i_2}^{\fh})_{\rm w} \| + \|  \fd_{i,j} \|.$ Then, taking the square of both sides, we obtain
	% ---
	\begin{align}
		\label{Ineq.Squared_Tri_Ineq}
		\| (\fy - \fc_{i_1}^{\fh})_{\rm w} \|^2 &
		 \leq \| (\fy - \fc_{i_2}^{\fh})_{\rm w} \|^2 \hspace{-.5mm} + \hspace{-.5mm} \| \fd_{i,j} \|^2 + 2 \| (\fy - \fc_{i_2}^{\fh})_{\rm w} \| \cdot \| \fd_{i,j} \|
		\nonumber\\&
		\stackrel{(a)}{\leq} \| (\fy - \fc_{i_2}^{\fh})_{\rm w} \|^2 + \frac{a\sigma_{\rm max}^2(\fSigma^{-1/2})}{\bar{n}^{1+ \mu + 2b}} + \frac{2\sigma_{\rm max}(\fSigma^{-1/2}) \sqrt{a( 1 + \zeta)}}{\bar{n}^{\frac{\mu}{2} + b}} ,
	\end{align}
	% ---
	where $(a)$ holds by $\| \fd_{i,j} \| \leq \sigma_{\rm max}(\fSigma^{-1/2}) \| \fc_{i_1}^{\fh} - \fc_{i_2}^{\fh} \|,$ see \cite[Lem. 5]{Salariseddigh_affine_23_arXiv}, \eqref{Eq.Alpha_nFast}, \eqref{Eq.ui2DistFast} and exploiting $\alpha_n = \sqrt{a}/\bar{n}^{\frac{1+ \mu + 2b}{2}}.$ Next, to evaluate the behaviour of terms in \eqref{Ineq.Squared_Tri_Ineq} we use a helpful lemma which establish bounds on the singular values of the inverse square root of covariance matrix $\fSigma^{-1/2}.$
	
	% ---
	\begin{customlemma}{6}
		\label{Lem.Spec_Bound_Mat_Pow}
		Let $\fSigma$ be a symmetric and positive definite covariance matrix and assume that $\sigma_{\rm min}(\fSigma) \in \Omega(\bar{n}^{-\mu})$ and $\sigma_{\rm max}(\fSigma) \in \mathcal{O}(\bar{n}^{\mu/2}),$ then for any $p \in \mathbb{R},$ with constants $C_{\sigma_{\rm min}} > 0$ and $C_{\sigma_{\rm max}} > 0,$ respectively. Then, we have 
		% ---C_{\sigma_{\rm min}}^p \bar{n}^{-p\mu} \le \sigma_t(\fSigma^p) \le C_{\sigma_{\rm max}}^p \bar{n}^{p\mu/2}
		% ---
		\begin{equation}
			\begin{cases}
				C_{\sigma_{\rm min}}^p \bar{n}^{-p\mu} \le \sigma_t(\fSigma^p) \le C_{\sigma_{\rm max}}^p \bar{n}^{p\mu/2}, & p>0 ,
				%\sigma_{\rm min}(\fSigma^p) \in \Omega(\bar{n}^{-p\mu}) \quad \text{and} \quad \sigma_{\rm max}(\fSigma^p) \in \mathcal{O}(\bar{n}^{p\mu/2}).
				\\
				C_{\sigma_{\rm max}}^{-|p|} \bar{n}^{-|p|\mu/2} \le \sigma_t(\fSigma^p) \le C_{\sigma_{\rm min}}^{-|p|} \bar{n}^{|p|\mu}, & p<0 .
			\end{cases}
		\end{equation}
		% ---
	\end{customlemma}
	% ---
	% ---
	\begin{proof}
		The proof is provided in Appendix \ref{App.Spec_Bound_Mat_Pow}.
	\end{proof}
	% ---
	Next, employing Lemma \ref{Lem.Spec_Bound_Mat_Pow} with $p=1/2$ gives $\sigma_{\rm max}(\fSigma^{-1/2}) \le C_{\sigma_{\rm min}}^{-1/2} \bar{n}^{\mu/2}.$ Therefore, recalling \eqref{Ineq.Squared_Tri_Ineq} for sufficiently large $n,$ we obtain
	% ---
	\begin{align}
		\label{Ineq.Norm_S_Conv}
		\| ( \fy - \fc_{i_2}^{\fh} )_{\rm w} \|^2 - \| ( \fy - \fc_{i_1}^{\fh} )_{\rm w} \|^2 \leq \frac{a^2 C_{\sigma_{\rm min}}^{-1} \bar{n}^{\mu} }{\bar{n}^{1+\mu+2b}} + \frac{2C_{\sigma_{\rm min}}^{-1/2} \bar{n}^{\mu/2} \sqrt{a( 1 + \zeta)}}{\bar{n}^{\frac{\mu}{2} + b}}
		\le \theta .
	\end{align}
	% ---
	Hence, recalling \eqref{Ineq.Error_Diff} and \eqref{Ineq.Norm_S_Conv} yields
	%%%
	\begin{align}
		\label{Ineq.GaussianContinuityFast}
		\big| f_{\fZ}(\fy - \fc_{i_1}^{\fh}) - f_{\fZ}(\fy - \fc_{i_2}^{\fh}) \big| \leq  f_{\fZ}(\fy - \fc_{i_1}^{\fh}) \cdot \big| 1 - e^{\frac{\theta}{2\sigma_Z^2}} \big| \leq \tau f_{\fZ}(\fy - \fc_{i_1}^{\fh}),
	\end{align}
	%%%
	for sufficiently small $\theta>0$ such that $|1-e^{\frac{\theta}{2\sigma_Z^2}}| \leq \tau.$ Now, using \eqref{Eq.ComplTypeIFast} we have the following lower bound on the sum of the type I and type II error probabilities
	%%%
	\begin{align}
		P_{e,1}(i_1) + P_{e,2}(i_2,i_1) &\geq 1-\tau - \int_{\mathbbmss{D}_{i_1,i_2}} f_{\fZ}(\fy - \fc_{i_1}^{\fh})\,d\fy + \int_{\mathbbmss{D}_{i_1}} f_{\fZ}(\fy - \fc_{i_2}^{\fh}) \,d\fy
		\nonumber\\&
		\geq 1- \tau - \int_{\mathbbmss{D}_{i_1,i_2}} \big| (f_{\fZ}(\fy - \fc_{i_1}^{\fh}) - f_{\fZ}(\fy - \fc_{i_2}^{\fh})) \big| \,d\fy.
	\end{align}
	%%%
	Hence, by (\ref{Ineq.GaussianContinuityFast}),
	%%%
	\begin{align}
		P_{e,1}(i_1) + P_{e,2}(i_2,i_1) \geq 1- \tau -\tau \int_{\mathbbmss{D}_{i_1,i_2}} f_{\fZ}(\fy - \fc_{i_1}^{\fh}) d\fy \geq 1-2\tau,
	\end{align}
	%%%
	which leads to a contradiction for sufficiently small $\tau$ such that $2\tau > 1 - e_1 - e_2.$ Clearly, this is a contradiction since the error probabilities tend to zero as $n \rightarrow \infty.$ Thus, the assumption in \eqref{Eq.Alpha_nFast} is false. This completes the proof of Lemma~\ref{Lem.Converse}.
	
	\end{proof}
	% ---

	\section{Spectrum of Covariance Matrix Power}
	\label{App.Spec_Bound_Mat_Pow}
	In the following, we provide bounds on the singular values of whitening transform which is a fractional matrix power. Our proof method employs the spectral decomposition \cite[Ch. 9]{Kunze71} of a matrix which reduces the problem to scalar asymptotics, and thus matrix powers simply raise eigenvalues to the same power, preserving the order and the asymptotic structure. 
	% ---
	\begin{proof}
	Observe that if $\sigma_{\rm min}(\fSigma) \in \Omega(\bar{n}^{-\mu})$ and $\sigma_{\rm max}(\fSigma) \in \mathcal{O}(\bar{n}^{\mu/2})$ with constants $C_{\sigma_{\rm min}} > 0$ and $C_{\sigma_{\rm max}} > 0,$ respectively, so that for sufficiently large $\bar{n},$ for every $t \in [\![\bar{n}]\!]:$
	% ---
	\begin{align}
		\label{Ineq.DB_Sing}
		C_{\sigma_{\rm min}} \bar{n}^{-\mu} \le \sigma_t(\fSigma) \le C_{\sigma_{\rm max}} \bar{n}^{\mu/2}.
	\end{align}
	% ---
	Then, via spectral decomposition \cite[Ch. 9]{Kunze71}, matrix $\fSigma$ for a real $p$ can be diagonalized as follows:	
	$$\fSigma^p = \fQ \Lambda^p \fQ^T,$$
	where $\fQ$ is an orthogonal matrix, i.e., $\fQ^T \fQ = I$ and $\Lambda^p = \mathrm{diag}(\lambda_1^p, \dots, \lambda_{\bar{n}}^p).$ Now, because $\fSigma^p$ is still symmetric positive definite we have $\sigma_t(\fSigma^p) = \lambda_t(\fSigma^p) = \lambda_t^p$ for every $t \in [\![\bar{n}]\!].$ Next, raising the double bound given in \eqref{Ineq.DB_Sing} to power $p$ we have
	% ---
	\begin{align}
		\label{Ineq.DB_Sing_Mat_Pow}
		C_{\sigma_{\rm min}}^p \bar{n}^{-p\mu} \le \sigma_t(\fSigma^p) \le C_{\sigma_{\rm max}}^p \bar{n}^{p\mu/2},
	\end{align}
	% ---
	which implies $\sigma_t(\fSigma^p) \in \Omega(\bar{n}^{-p\mu}) \cap \mathcal{O}(\bar{n}^{p\mu/2}).$ Next, we extend these results for negative powers, i.e., when $p <0.$ Observe that in these cases
	$\fSigma^p = \fQ \Lambda^p \fQ^T$ where $\lambda_t^p = \lambda_t^{-|p|}.$ Then, raising to power $|p|$ and taking reciprocal the double bounds in \eqref{Ineq.DB_Sing_Mat_Pow} gives
	% ---
	\begin{align}
		%\label{Ineq.DB_Sing_Mat_Pow}
		C_{\sigma_{\rm max}}^{-|p|} \bar{n}^{-|p|\mu/2} \le \sigma_t(\fSigma^p) \le C_{\sigma_{\rm min}}^{-|p|} \bar{n}^{|p|\mu}.
	\end{align}
	% ---
	\end{proof}
	% ---

	\section*{}
	% ---
	\bibliographystyle{IEEEtran}
	\bibliography{Lit}
	
	% ---
\end{document}